\newtheorem*{remark}{Remark}
\newtheorem{prop}{Proposition}
\newcounter{tempEquationCounter}
\newcounter{thisEquationNumber}
\DeclareMathOperator{\E}{\mathbb{E}}
\newcommand{\vast}{\bBigg@{4}}
\newcommand{\Vast}{\bBigg@{5}}
\newcounter{relctr} 
\everydisplay\expandafter{\the\everydisplay\setcounter{relctr}{0}} 
\newcommand\labelrel[2]{%
  \begingroup
    \refstepcounter{relctr}%
    \stackrel{\textnormal{(\alph{relctr})}}{\mathstrut{#1}}%
    \originallabel{#2}%
  \endgroup
}
\begin{document}

\title{Statistical QoS Analysis of Reconfigurable Intelligent Surface-assisted D2D Communication}

\author{
\IEEEauthorblockN{Syed Waqas Haider Shah, \textit{Student Member, IEEE}, Adnan Noor Mian, \textit{Member, IEEE}, Shahid Mumtaz, \textit{Senior Member, IEEE}, Anwer Al-Dulaimi, \textit{Senior Member, IEEE}, Chih-Lin I, \textit{Fellow, IEEE}, and Jon Crowcroft, \textit{Fellow, IEEE}}
}

\maketitle

\long\def\symbolfootnote[#1]#2{\begingroup%
\def\thefootnote{\fnsymbol{footnote}}\footnote[#1]{#2}\endgroup}
\symbolfootnote[0]{\hrulefill \\
Copyright (c) 2015 IEEE. Personal use of this material is permitted. However, permission to use this material for any other purposes must be obtained from the IEEE by sending a request to pubs-permissions@ieee.org.

The work of Syed Waqas Haider Shah is supported by the Commonwealth Scholarship Commission, funded by the U.K. Government. The work of Shahid Mumtaz is supported by the FCT project (Intelligent and Sustainable Aerial-Terrestrial IoT Networks-BATS) PTDC/EEI-TEL/1744/2021.

Syed Waqas Haider Shah and Jon Crowcroft are with the Computer Lab, University of Cambridge, 15 JJ Thomson Avenue, Cambridge, UK CB3 0FD (\{sw920, jon.crowcroft\}@cl.cam.ac.uk). Syed Waqas Haider Shah is also with the Electrical Engineering Department, Information Technology University, Lahore 54000, Pakistan (waqas.haider@itu.edu.pk).

Adnan Noor Mian is with the Computer Science Department, Information Technology University, Lahore 54000, Pakistan (adnan.noor@itu.edu.pk).

Shahid Mumtaz is with the Institute of Telecommunication, University of Aveiro, 4554 Aveiro, Portugal
(smumtaz@av.it.pt).

Anwer Al-Dulaimi is with the R\&D Department, EXFO, Montreal, QC H4S 0A4, Canada
(anwer.al-dulaimi@exfo.com) 

Chih-Lin I is with China Mobile Research Institute, Beijing 100053, China
(icl@chinamobile.com)
}

\maketitle

\begin{abstract}
This work performs the statistical QoS analysis of a Rician block-fading reconfigurable intelligent surface (RIS)-assisted D2D link in which the transmit node operates under delay QoS constraints. First, we perform mode selection for the D2D link, in which the D2D pair can either communicate directly by relaying data from RISs or through a base station (BS). Next, we provide closed-form expressions for the effective capacity (EC) of the RIS-assisted D2D link. When channel state information at the transmitter (CSIT) is available, the transmit D2D node communicates with the variable rate $r_t(n)$ (adjustable according to the channel conditions); otherwise, it uses a fixed rate $r_t$. It allows us to model the RIS-assisted D2D link as a Markov system in both cases. We also extend our analysis to overlay and underlay D2D settings. To improve the throughput of the RIS-assisted D2D link when CSIT is unknown, we use the HARQ retransmission scheme and provide the EC analysis of the HARQ-enabled RIS-assisted D2D link. Finally, simulation results demonstrate that: i) the EC increases with an increase in RIS elements, ii) the EC decreases when strict QoS constraints are imposed at the transmit node, iii) the EC decreases with an increase in the variance of the path loss estimation error, iv) the EC increases with an increase in the probability of ON states, v) EC increases by using HARQ when CSIT is unknown, and it can reach up to $5\times$ the usual EC (with no HARQ and without CSIT) by using the optimal number of retransmissions. 

\end{abstract}
\IEEEpeerreviewmaketitle

\begin{IEEEkeywords}
Reconfigurable intelligent surface, D2D communication, effective capacity, quality-of-service, mode selection.
\end{IEEEkeywords}

\section{Introduction}
In recent years, cellular networks have seen exponential growth in popularity due to their ubiquitous and high data rate connectivity. Fifth-generation (5G) cellular networks, which are considered a game-changer for wireless connectivity, have started being deployed in various countries around the globe. The stories of its early deployment are overwhelming, and researchers are actively looking for other deployment scenarios to achieve its full breadth \cite{ericson2020}. At the same time, research on beyond 5G cellular networks such as sixth-generation (6G) cellular networks has also begun. 6G is expected to provide extreme data rates and ultra-reliability to a massive number of cellular devices. To provide seamless connectivity to billions of cellular devices, various enabling technologies are under consideration. Device-to-Device (D2D) communication is one of these enabling technologies that can provide low latency and energy-efficient connectivity without exerting extra burden on the cellular network infrastructure. D2D communication is a paradigm in which two closely located devices communicate without routing their data through a base station (BS). D2D devices are generally low-cost user equipments with limited resources and computational capacity. Therefore, successful D2D communication critically depends on the propagation environment. For instance, D2D communication may not be successful if the distance between the candidate D2D devices becomes too large or a line-of-sight link is not available. Moreover, resources for D2D users are allocated in either overlay or underlay settings. In the former, D2D and cellular users are assigned orthogonal physical resource blocks (PRBs), which ensures no inter-channel interference, whereas, in the latter, D2D users reuse the cellular users' PRBs, which reduces the spectrum scarcity issue of the cellular networks. However, in underlay settings, both the D2D and cellular users experience inter-channel interference that leads to poor communication quality. It is because the D2D users operate under interference power constraints while in underlay communication mode. This constraint limits the transmit power of the D2D users so that the cellular users who are using the same PRBs for transmission experience minimum inter-channel interference \cite{ji2020secrecy}. It is one of the main limiting factors on the achievable capacity of the D2D users and consequently on the communication quality \cite{kusaladharma2016underlay}. These reasons may restrict the applications and use cases of most existing approaches for D2D communication \cite{duan2020emerging}. To this end, Reconfigurable intelligent surface (RIS) \cite{basar2019wireless} is envisioned to further improve the performance of D2D communication link by passive beamforming, which will not only resolve the performance degradation of D2D communication due to interference while in underlay communication mode but can also open new horizons for smooth integration of D2D communication in various new applications and use cases. This can be achieved by configuring the RIS to control the reflection, refraction, and scattering of electromagnetic waves that impinge on the surface. Moreover, RISs allow transmission distance extension and energy-efficient communication, which are some of the core requirements of D2D communication in order to make it suitable for various use cases and applications \cite{chen2020reconfigurable}. In short, RIS-enabled D2D communication can help reduce the interference in the network, leading to higher data rates and better QoS performance for both D2D and cellular users.

RIS is a two-dimensional surface comprised of multiple scattering elements (called unit cells) and small electronic circuits with reconfigurable phase and magnitude responses. The unit cells are low cost and passive elements that can reflect the incoming signals to the desired receiver by adjusting their phase and amplitude response by appropriately configuring the low-power electronic circuits by using an RIS controller. RISs are expected to make the wireless environment controllable and reconfigurable, which is uncontrollable in current wireless systems due to the presence of non-configurable physical objects that influence the propagation of the signals. Since, the elements used for reflecting the signals are passive and the electronic circuits that ensure the tunability of the RIS consume almost no power once configured, RISs are envisioned to require smaller power than other technologies. This makes RISs an energy-efficient solution to enhancing communication quality, which is one of the key requirements of future cellular networks. Using the RIS for D2D communication can help reduce the interference in the network, leading to higher data rates and better quality-of-service (QoS) for both D2D and cellular users. 

Due to its simple and efficient operation, RISs have gained significant research attention recently. Researchers have proposed its possible integration with various wireless technologies such as millimeter-wave communication \cite{qiao2020secure}, cognitive radios \cite{zhang2020intelligent}, non-orthogonal multiple access \cite{mu2020exploiting}, Terahertz communication, beyond multi-input multi-output (MIMO) networks for 6G \cite{8981888,Nemanja2020},  wireless power transfer \cite{pan2020intelligent}, and unmanned aerial vehicle-based communication \cite{yang2020performance}. More recently, authors of \cite{chen2020reconfigurable} proposed the integration of RISs with D2D communication. The authors proposed an RIS phase shift optimization problem to maximize the system sum-rate under phase shift and power constraints. The authors of \cite{ji2020reconfigurable} also provide a joint optimization problem to maximize the sum-rate of cellular and D2D users. The authors of \cite{cai2020two} introduce a two-timescale optimization scheme for RIS-assisted D2D communication. The idea of this optimization scheme is to maximize the ergodic capacity of D2D users while keeping the outage probability of cellular users as a constraint. These studies aim to mitigate D2D interference by using RISs to enhance the communication capacity and quality of cellular users. In another study \cite{jia2020reconfigurable}, the authors investigate the use of RIS to improve the energy efficiency of D2D communication. The authors propose a joint power control and an RIS passive beamforming optimization algorithm for achieving energy-efficient D2D communication. Although these studies provide significant insights for RIS-assisted D2D communication, none of them study the impact of RISs on the QoS and reliability of D2D communication. An RIS-assisted D2D network must provide reliable and QoS-enabled communication for its potential use in various applications. To this end, we provide a throughput analysis of RIS-assisted D2D communication under statistical QoS constraints. For our analysis, we use a well-known analytical tool: the Effective Capacity (EC).

The EC is a link-layer model that uses the concepts of queuing theory to find the throughput of the system under statistical QoS constraints imposed at the transmitter's queue \cite{wu2003effective}. It provides a maximum arrival rate at the transmitter's queue while considering a time-varying channel. Due to its efficacy in finding the system's throughput under queuing constraints, it has been widely used in performance analysis of different wireless networks \cite{qiao2012effective,musavian2010effective,cheng2013qos}. It has also been used to find the statistical QoS guarantees for D2D communication in various network settings \cite{shah2019impact,shah2020statistical,shah2021effective}. In this paper, we provide, for the first time, the statistical QoS analysis and QoS provisioning of RIS-assisted D2D communication, by using the EC tool. This analysis will pave the way for possible integration of RIS-assisted D2D communication in beyond-5G cellular networks. Major contributions of this work are as follows:
\begin{itemize}
    \item We propose a mode selection mechanism for RIS-assisted D2D communication, which selects a communication link for uplink transmission from the candidate links (RIS-assisted D2D and RIS-assisted cellular links) based on path loss measurements of the respective links. We also provide a framework for calculating the path loss of the candidate RIS-assisted communication links. 
    \item We provide a throughput analysis of RIS-assisted D2D communication under statistical QoS guarantees imposed as delay constraints at the transmitter's queue. We also investigate the impact of the proposed mode selection mechanism of D2D communication on the EC analysis. In addition, we extend our analysis to underlay and overlay modes of D2D communication. 
    \item The throughput of a wireless channel changes rapidly with the change in the channel conditions, and this phenomenon becomes more critical in D2D communication due to its opportunistic channel assignment. Therefore, we extend our analysis to scenarios when CSI is available at the receiver only and when CSI is available at both the transmitter and the receiver. 
    \item We note that the packet drop ratio increases quite rapidly when the transmitter sends data without knowing the channel conditions before the transmission. This high packet drop ratio leads to poor QoS performance of the D2D channel. Therefore, we propose using the HARQ retransmission scheme to enhance the QoS performance of RIS-assisted D2D communication. To this end, we provide a complete framework for the integration of HARQ with RIS-assisted D2D communication. Additionally, we perform the throughput analysis of the HARQ-enabled RIS-assisted D2D communication under statistical QoS guarantees to investigate its efficacy. 
\end{itemize}

The remainder of the paper is organized as follows. Section II introduces the system model of the RIS-assisted D2D communication in a single-tier cellular network. Sections III presents the mode selection mechanism. Section IV and Section V provide the statistical QoS analysis of the RIS-assisted D2D communication without and with CSI at the transmitter, respectively. Section VI presents the QoS provisioning of the RIS-assisted D2D link using the HARQ retransmission scheme. Simulation results are presented in Section VII. Finally, Section VIII concludes the paper. 

\section{System Model and Preliminaries}
\subsection{System Model}
\begin{figure}[ht]
\begin{center}
	\includegraphics[width=4in]{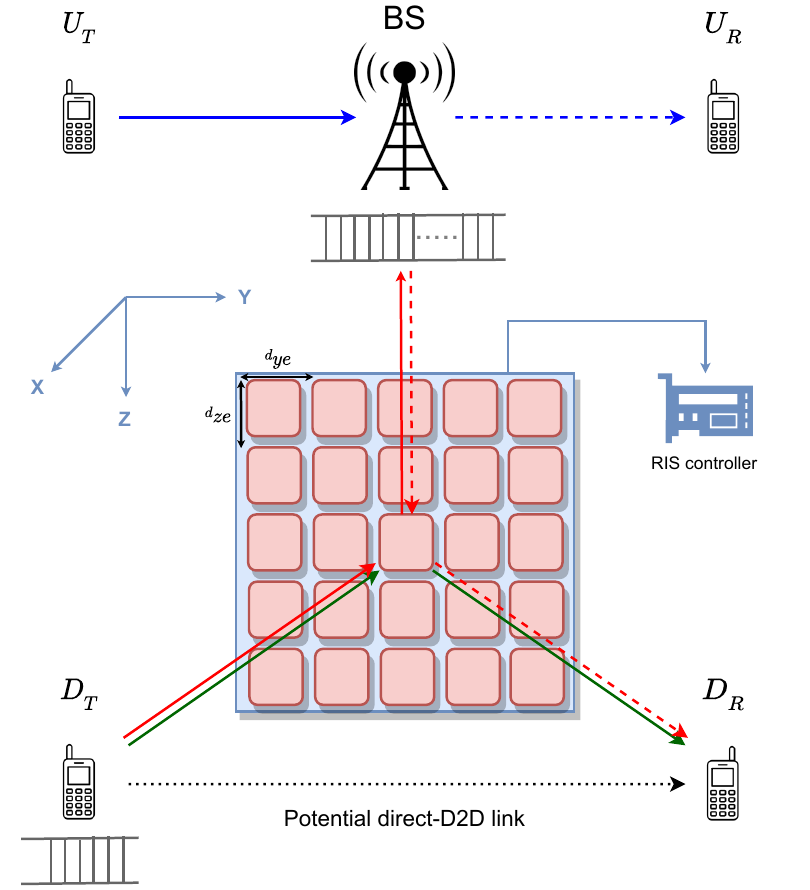}
\caption{System model of RIS-assisted D2D communication: solid green arrows represent RIS-assisted D2D link; solid and striped red arrows are the uplink and downlink of RIS-assisted cellular link, respectively; solid and striped blue arrows are the uplink and downlink of cellular user link; dotted black arrow shows the potential direct-D2D link.}
\label{system_model}
\end{center}
\end{figure}

We consider a single-cell RIS-assisted cellular network with two pairs of user equipment (UE), as shown in Fig. \ref{system_model}. We term one pair as a potential D2D pair, where the transmitter UE ($D_T$ can communicate with the receiver UE ($D_R$) in two modes. One mode is the RIS-assisted D2D mode. The second is the RIS-assisted cellular mode. In the RIS-assisted D2D mode, $D_T$ can communicate with $D_R$ by relaying its data from the RIS elements placed in the network. In the RIS-assisted cellular mode, the RIS elements are used to establish the uplink communication between $D_T$ and BS and the downlink communication between BS and $D_R$. The second pair of UEs ($U_T$ and $U_R$), which is termed cellular user pair, operates only in RIS-assisted cellular mode. For D2D communication, selecting a communication mode at $D_T$ from the available modes can be done by using the mode selection mechanism.

We note that when the D2D pair communicates in the underlay mode, it experiences interference from the cellular users. To this end, RISs can be utilized to effectively eliminate the inter-channel interference caused by the reuse of the channel. It can be achieved by configuring the RIS to control the reflection, refraction, and scattering of electromagnetic waves that impinge on the surface. The signals reflected by the RIS can constructively combine with signals from other paths to boost the desired receiver signal-to-interference-plus-noise ratio (SINR). It is done by using passive beamforming \cite{xu2021ris}. Similarly, when there are multiple D2D users present in a RIS-assisted D2D communication network, interference among multiple transmitted and received signals can be mitigated using the passive beamforming mechanism of the RIS system. However, the impact of user density (multiple D2D pairs in the network) on the QoS performance of the proposed RIS-assisted D2D communication is out of the scope of this work.
\subsection{RIS-assisted D2D Communication}
RIS-assisted D2D communication is a new paradigm of D2D communication in which the throughput of D2D users can be enhanced while reducing the D2D interference by using intelligent and reconfigurable surfaces. These surfaces can control the refraction, reflection, and scattering of the wireless signals and direct them towards the desired receiver. For our analysis, an RIS is a two-dimensional uniform planar array of $N \times N$ elements. Each RIS element introduces a phase shift to direct/reflect the incoming signal from the transmitter ($D_T$, $U_T$, or BS) to the desired destination ($D_R$, $U_R$, or BS). The phase shift range of each element is fixed and can be taken with equal intervals between $[0,2\pi]$. We assume $b_q$ is the number of quantization bits; therefore, $2^{b_q}$ patterns of phase shift values are generated. The reflection coefficient of the RIS element in the $m_z^{\text{th}}$ row and $m_y^{\text{th}}$ column is $\zeta_{m_z,m_y} = e^{j\phi_{m_z,m_y}}$, where $\phi_{m_z,m_y} = \frac{2l_{m_z,m_y}\pi}{2^{b_q}-1}$, and $l_{m_z,m_y} = \{0,1,2,\dots,2^{b_q}-1\}$ and $1\leq m_z,m_y\leq N$. We also note that the phase shifts of the RIS elements needs to be selected in a way that maximizes the gain of the respected channels. Therefore, a careful investigation is required for calculating phase shifts of the RIS elements for different network settings. To this end, we provide a step-by-step procedure for calculating the phase shifts in Appendix \ref{prop_phase_shift}.

For our analysis of RIS-assisted D2D communication, we establish a three dimensional Cartesian coordinate system. We place the RIS on the Y-Z plane, with $d_{ye}$ and $d_{yz}$ denoting the spacing between adjacent RIS elements, as shown in Fig. \ref{system_model}. $M_{m_z,m_y} = \{0,m_yd_{ye},m_zd_{ze}\}$ are the locations of the RIS elements which are calculated with reference to the vertex in the bottom right corner of the RIS. We place BS, D2D, and the cellular pair on the X-Y plane (without the negative X-axis). Hence, for the D2D pair, $D_T$ is positioned at $M_{D_T} = \{D_{Tx},D_{Ty},0\}$ and $D_R$ is positioned at $M_{D_R} = \{D_{Rx},D_{Ry},0\}$. Similarly, the cellular pair and BS are positioned at $M_{U_T} = \{U_{Tx},U_{Ty},0\}$, $M_{U_R} = \{U_{Rx},U_{Ry},0\}$, and $M_{BS} = \{BS_{x},BS_{y},0\}$, respectively. Using these coordinates, we can find the distance between $D_T$ and $D_R$ and between $D_T$ and BS, which then can be used for mode selection mechanism. The corresponding distances are as follows:
\begin{subequations}\label{distance}
  \begin{align}
  d^{^{D_T}}_{m_z,m_y} &= \sqrt{(D_{Tx})^2 + (D_{Ty}-m_yd_{ye})^2 + (-m_zd_{ze})^2}\\
  d^{^{m_z,m_y}}_{D_R} &= \sqrt{(D_{Rx})^2 + (D_{Ry}-m_yd_{ye})^2 + (-m_zd_{ze})^2}\\
  d^{^{m_z,m_y}}_{BS} &= \sqrt{(BS_x)^2 + (BS_y-m_yd_{ye})^2 + (-m_zd_{ze})^2},
  \end{align}
\end{subequations}
where $d^{^{D_T}}_{m_z,m_y}$ is the distance between $D_T$ and the center of the RIS, $d^{^{m_z,m_y}}_{D_R}$ is the distance between the center of the RIS and $D_R$, and $d^{^{m_z,m_y}}_{BS}$ is the distance between the center of the RIS and the BS. We also assume that $d^{^{m_z,m_y}}_{BS} = d_{m_z,m_y}^{BS}$. By using (\ref{distance}), we can find the distances for the RIS-assisted D2D mode and RIS-assisted cellular mode, which are $D_d = d^{^{D_T}}_{m_z,m_y} + d^{^{m_z,m_y}}_{D_R}$ and $D_{c,1} = d^{^{D_T}}_{m_z,m_y} + d^{^{m_z,m_y}}_{BS}$, respectively. Moreover, we term the transmission link for both of the communication modes as a virtual line of sight (VLoS) path because the RIS reflected beam is directional. This VLoS path can be considered as the dominant path among various multipaths because an RIS-assisted channel is a reflection-dominated channel. Therefore, we use the Rician distribution model in our analysis where this VLoS path is considered as the dominant LoS path and all other multipaths are considered as non-line of sight (NLoS). Therefore, the channel for the RIS-assisted D2D mode can be written as:
\begin{equation}\label{D2D_channel}
  h^{^{D_T,D_R}}_{m_z,m_y} = \sqrt{\frac{\alpha}{1+\alpha}}\Upsilon_{m_z,m_y}^{^{D_T,D_R}} + \sqrt{\frac{1}{1+\alpha}}\Upsilon_{_{NLoS}}^{^{D_T,D_R}}
\end{equation}
where $\alpha$ is the Rician factor, and $\Upsilon_{m_z,m_y}^{^{D_T,D_R}}$ and $\Upsilon_{_{NLoS}}^{^{D_T,D_R}}$ are the LoS and NLoS components of the channel, respectively. They can be calculated as
\begin{subequations}\label{LoS_and_NLoS_D2D}
  \begin{align}
  \Upsilon_{m_z,m_y}^{^{D_T,D_R}} &= \sqrt{PL(d^{^{D_T}}_{m_z,m_y}  d^{^{m_z,m_y}}_{D_R})} e^{\frac{-j2\pi}{\lambda}D_d}\\
  \Upsilon_{_{NLoS}}^{^{D_T,D_R}} &=\sqrt{PL(d^{^{D_T}}_{m_z,m_y}  d^{^{m_z,m_y}}_{D_R})}\xi^{^{D_T,D_R}}_{_{NLoS}}
  \end{align}
\end{subequations}
where $\lambda$ is the wavelength and $\xi^{^{D_T,D_R}}_{_{NLoS}} \sim \mathcal{CN}(0,1)$ is the small scale fading of the NLoS component of the channel which follows a complex normal distribution with zero mean and unity variance. $PL(d^{^{D_T}}_{m_z,m_y}  d^{^{m_z,m_y}}_{D_R})$ is the path loss of the RIS-assisted D2D channel, which can be written as \cite{9206044}:
\begin{equation}\label{pathloss_direct}
    PL(d^{^{D_T}}_{m_z,m_y}  d^{^{m_z,m_y}}_{D_R}) = PL_d=\frac{64\pi^3(d^{^{D_T}}_{m_z,m_y}  d^{^{m_z,m_y}}_{D_R})^2}{G G_rG_tN^2d_{ye}d_{ze}\lambda^2F(\upsilon_{_{D_T}},\mu_{_{D_T}})F(\upsilon_{_{D_R}},\mu_{_{D_R}})}
\end{equation}
where $G_t$, $G_r$, and $G$ are the gains of $D_T$, $D_R$, and the RIS unit cell, respectively. $F(\upsilon,\mu)$ is the normalized power radiation pattern and $\upsilon_{_{D_T}}$ and $\mu_{_{D_T}}$ ($\upsilon_{_{D_R}}$ and $\mu_{_{D_R}}$) are the elevation and the azimuth angles between $D_T$ and the center of the RIS (between the center of the RIS and $D_R$), respectively. 
As for the RIS-assisted cellular mode, we note that the channel between $D_T$ and $D_R$ is a two-hop channel because transmission is relayed through the BS. Therefore, we have to find two channels in this mode, one from $D_T$ to BS ($D_T \to$RIS$\to$ BS) and the second from BS to $D_R$ (BS$\to$RIS$\to D_R$). These channels are
\begin{subequations}\label{cellular_channel}
  \begin{align}
  h^{^{D_T,BS}}_{m_z,m_y} &= \sqrt{\frac{\alpha}{1+\alpha}}\Upsilon_{m_z,m_y}^{^{D_T,BS}} + \sqrt{\frac{1}{1+\alpha}}\Upsilon_{_{NLoS}}^{^{D_T,BS}}\\
  h^{^{BS,D_R}}_{m_z,m_y} &= \sqrt{\frac{\alpha}{1+\alpha}}\Upsilon_{m_z,m_y}^{^{BS,D_R}} + \sqrt{\frac{1}{1+\alpha}}\Upsilon_{_{NLoS}}^{^{BS,D_R}}
  \end{align}
\end{subequations}
where $\Upsilon_{m_z,m_y}^{^{D_T,BS}}$ and $\Upsilon_{m_z,m_y}^{^{BS,D_R}}$ are the LoS components of the $D_T \to$RIS$\to$ BS and BS$\to$RIS$\to D_R$ channels, respectively. These components can be obtained by finding the path loss of the respective channels. The path loss of the $D_T \to$RIS$\to$ BS and BS$\to$RIS$\to D_R$ channels can be written as \cite{9206044}:
\begin{subequations}\label{pathloss_cellular}
\begin{align}
    PL(d^{^{D_T}}_{m_z,m_y}d^{^{m_z,m_y}}_{\text{BS}}) &= PL_{_{D_T,\text{BS}}} =\frac{64\pi ^3(d^{^{D_T}}_{m_z,m_y}  d^{^{m_z,m_y}}_{\text{BS}})^2}{GG_{\text{BS}}G_tN^2d_{ye}d_{ze}\lambda^2F(\upsilon_{_{D_T}},\mu_{_{D_T}})F(\upsilon_{_{\text{BS}}},\mu_{_{\text{BS}}})},\\
    PL(d^{^{\text{BS}}}_{m_z,m_y}d^{^{m_z,m_y}}_{D_R})&= PL_{_{\text{BS},D_R}}=\frac{64\pi^3(d^{^{\text{BS}}}_{m_z,m_y}d^{^{m_z,m_y}}_{D_R})^2}{GG_rG_{\text{BS}}N^2d_{ye}d_{ze}\lambda^2F(\upsilon_{_{\text{BS}}},\mu_{_{\text{BS}}})F(\upsilon_{_{D_R}},\mu_{_{D_R}})},
\end{align}
\end{subequations}
where $G_{\text{BS}}$ is the gain of the BS and $\upsilon_{_{\text{BS}}}$ and $\mu_{_{\text{BS}}}$ are the elevation and the azimuth angles between the BS and the center of the RIS, respectively. Then, we can find the LoS components of the $D_T \to$RIS$\to$ BS and BS$\to$RIS$\to D_R$ channels ($\Upsilon_{m_z,m_y}^{^{D_T,BS}}$ and $\Upsilon_{m_z,m_y}^{^{BS,D_R}}$) by substituting (\ref{pathloss_cellular}a), (\ref{pathloss_cellular}b), $D_{c,1}$, and $D_{c,2}$ ($= d^{^{\text{BS}}}_{m_z,m_y} + d^{^{m_z,m_y}}_{_{D_R}}$) in (\ref{LoS_and_NLoS_D2D}a). To find the NLoS components of the uplink ($\Upsilon_{_{NLoS}}^{^{D_T,BS}}$) and the downlink channels ($\Upsilon_{_{NLoS}}^{^{BS,D_R}}$), one has to find the small scale fading for both the channels. Small scale fading for the uplink channel can be written as $\xi^{^{D_T,\text{BS}}}_{_{NLoS}} \sim \mathcal{CN}(0,1)$, which follows a complex  normal  distribution  with  zero  mean and unity variance. Similarly, small scale fading for the downlink channel is $\xi^{^{\text{BS},D_R}}_{_{NLoS}} \sim \mathcal{CN}(0,1)$. By substituting these along with (\ref{pathloss_cellular}a) and (\ref{pathloss_cellular}b) in (\ref{LoS_and_NLoS_D2D}b) we can find the NLoS components of the uplink and downlink channels. We note that the transmission channel of the RIS-assisted  cellular mode is a two-hop channel, therefore, it consumes two time slots for the transmission. Further discussion on the rate of the two-hop channel is available in Section IV-A.

\section{Mode Selection} \label{mode_selection}
The proposed mode selection at $D_T$ chooses the best communication mode among the available modes. For the considered system model, it implies selection between RIS-assisted D2D mode ($D_T$ $\to$ RIS $\to$ $D_R$) and RIS-assisted cellular mode ($D_T$ $\to$ RIS $\to$ BS $\to$ RIS $\to$ $D_R$). This selection of a suitable communication mode is based upon the channel quality of the candidate transmission links. There are multiple ways to measure the channel quality of a transmission link, such as received signal strength, signal-to-noise ratio (in case of overlay communication mode), signal-to-interference-and-noise-ratio (in case of underlay communication mode), instantaneous channel state information (CSI), statistical CSI (pathloss), etc. In the proposed mode selection mechanism, we use statistical CSI (pathloss) as the only feature for selecting a communication mode. It is because the pathloss varies slowly in a wireless channel (large scale fading/pathloss typically changes over the interval of seconds and minutes), and once estimated, the observation/measurement of pathloss remains valid for multiple seconds. Whereas, the instantaneous CSI varies fast due to the small-scale fading (even for a stationary wireless channel, small-scale fading should be estimated several times in a time slot), which makes it a less desirable feature (due to heavy signaling overhead) for the mode selection mechanism.

The mode selection for the RIS-assisted D2D communication for uplink transmission is mapped to the following binary hypothesis testing (BHT) problem:
\begin{equation}
	\label{eq:H0H1}
	 \begin{cases} H_0: & \text{RIS-assisted D2D mode ($D_T$ $\to$ RIS $\to$ $D_R$)} \\
       H_1: & \text{RIS-assisted cellular mode \bigg(\begin{tabular}[c]{@{}l@{}}$D_T \to$ RIS $\to$ BS\\ $\to$ RIS $\to D_R$\end{tabular}\bigg).} \end{cases}
\end{equation}

In this mode selection method, we use the path loss of the respective links as a feature to construct the BHT. When the path loss of the RIS-assisted D2D link is less than the path loss of the RIS-assisted cellular uplink, $D_T$ transmits data on the RIS-assisted D2D link. Otherwise, $D_T$ uses the RIS-assisted cellular uplink for data transmission. Let $\widehat{PL}_{d}$ and $\widehat{PL}_{_{D_T,\text{BS}}}$ denote the noisy measurement of $PL_d$ and $PL_{_{D_T,\text{BS}}}$, respectively. $PL_d$ and $PL_{_{D_T,\text{BS}}}$ are the path loss of the RIS-assisted D2D link ($D_T \to$ RIS $\to D_R$) and RIS-assisted cellular uplink ($D_T \to$ RIS $\to$ BS), respectively, and can be found using \eqref{pathloss_direct} and \eqref{pathloss_cellular}, respectively. We assume that $\widehat{PL}$ follows a Gaussian distribution with $PL$ as the mean and $\sigma_{_{PL}}^2$ as the variance of the estimation error of the path loss measurement. Thus, we can write $\widehat{PL} \sim \mathcal{N} (PL,\sigma^2_{_{PL}})$. Note that $D_T$ uses its immediate channels, which are RIS-assisted D2D channel and RIS-assisted cellular uplink, for the mode selection. It is because $D_T$ does not have global information of the network, and the BS is responsible for the downlink transmission to $D_R$ in the RIS-assisted cellular mode. Therefore, we establish the test statistic $\tau = \widehat{PL}_d - \widehat{PL}_{_{D_T,\text{BS}}}$, which is inline with the prior work \cite{mahmood2013mode}. By using $\tau$, the probability distributions of the RIS-assisted cellular and RIS-assisted D2D modes become $\tau|H_1\sim \mathcal{N}(m_{\tau},\sigma_{\tau}^2)$ and $\tau|H_0\sim \mathcal{N}(-m_{\tau},\sigma_{\tau}^2)$, respectively. Where, the mean $m_{\tau}$ can be calculated by taking the difference of the path loss of the communication modes ($m_{\tau} = PL_{d}-PL_{_{D_T,\text{BS}}}$), and the variance $\sigma_{\tau}^2$ is equal to twice the variance of the estimation error of the path loss measurement ($2\sigma_{_{PL}}^2$). Without loss of generality, let $m_{\tau} > 0$. By implementing log-likelihood ratio test, the mode selection problem can be written as,
\begin{equation}\label{LLRT}
\log_e (\frac{p(\tau|H_1)}{p(\tau|H_0)}) \underset{H_0}{\overset{H_1}{\gtrless}} \log_e \frac{\pi_0}{\pi_1} \implies \tau \underset{H_0}{\overset{H_1}{\gtrless}} \log_e (\frac{\pi_0}{\pi_1})\frac{\sigma_{\tau}^2}{2m_{\tau}}.
\end{equation}
where $\pi_0$ and $\pi_1$ are the prior probabilities of RIS-assisted D2D and RIS-assisted cellular modes, respectively. 

To measure the performance of the proposed BHT problem, we use the error and correct detection probabilities. The type-1 error probability can be written as:
\begin{equation} \label{P_e1}
\begin{split}
P_{e,1} = P(H_1|H_0) &= P\bigg(\tau > \log_e (\frac{\pi_0}{\pi_1})\frac{\sigma_{\tau}^2}{2m_{\tau}}|H_0\bigg)= Q\bigg(\frac{\log_e (\pi_0/\pi_1)\sigma_{\tau}^2 + 2m_{\tau}^2}{2m_{\tau}\sigma_{\tau}} \bigg)
\end{split}
\end{equation}
where $Q(x)$ is the Marcum Q-function and can be written as $Q(x)=\frac{1}{\sqrt{2\pi}} \int_x^\infty  e^{-\frac{t^2}{2}} dt$. Similarly, the type-2 error probability becomes
\begin{equation}\label{P_e2}
\begin{split}
P_{e,2} = P(H_0|H_1) &= P\bigg(\tau < \log_e (\frac{\pi_0}{\pi_1})\frac{\sigma_{\tau}^2}{2m_{\tau}}|H_1\bigg)=1- Q\bigg(\frac{\log_e (\pi_0/\pi_1)\sigma_{\tau}^2 - 2m_{\tau}^2}{2m_{\tau}\sigma_{\tau}} \bigg).
\end{split}
\end{equation}
Now, the probability of correct detection of $H_0$ and $H_1$ can be written as the following: $P_{d,1} = P(H_0|H_0) = 1-P_{e,1}$ and $P_{d,2} = P(H_1|H_1) = 1-P_{e,2}$. Note that for equal priors ($\pi_0 = \pi_1$), $\log_e (\pi_0/\pi_1) = 0$. Consequently, the error probabilities in (\ref{P_e1}) and (\ref{P_e2}) would become $P_{e,1} = Q(\frac{m_{\tau}}{\sigma_{\tau}})$ and $P_{e,2} = 1-Q(\frac{-m_{\tau}}{\sigma_{\tau}})$, respectively. These error and correct detection probabilities will be used to find the impact of mode selection on the EC analysis of D2D communication (see Section IV and V). Additionally, to measure the reliability of the BHT problem and the path loss measurement, we use the Kullback-Leibler divergence (KLD). In our case, the KLD $D(p(\tau|H_1)||p(\tau|H_0))$ is given as \footnote{In general, Jensen-Shannon divergence (JSD) should be preferred over KLD because JSD is a true distance measure, while KLD is not. However, in our case, we have enough symmetry such that $D(p(\tau|H_1)||p(\tau|H_0)) = D(p(\tau|H_0)||p(\tau|H_1))$, so it suffices to use KLD. }: $D = \int_{-\infty}^{\infty} p(\tau|H_1) \log(\frac{p(\tau|H_1)}{p(T|H_0)}) d\tau = \frac{m_{\tau}^2}{\sigma_{\tau}^2}$.

Next, we perform statistical QoS analysis for RIS-assisted D2D communication in cellular networks. Note that data transmission on a wireless channel depends on the varying channel conditions. If the transmitter knows the CSI prior to the transmission, it uses this information to adjust its transmission rate and transmit power accordingly. On the other hand, when the transmitter does not know the CSI prior to the transmission, it transmits data using a fixed transmission rate and transmit power. In our analysis, we first assume that the transmitter is unaware of the channel conditions and performs QoS analysis using a fixed rate and a fixed power. Later, we also assume that the channel conditions are known at the transmitter and then the transmitter performs QoS analysis using variable transmission rate. 
\section{Statistical QoS Analysis With CSI at the Receiver Only }
When $D_T$ has no knowledge of CSI prior to transmission, data is sent at a fixed transmission rate and fixed transmission power. Although transmission without the knowledge of the channel conditions may lead to a higher packet drop ratio, this issue can be addressed by employing retransmission schemes. We discuss the use of retransmission schemes for RIS-assisted D2D communication later (in Section \ref{HARQ}). Transmission without CSI at the transmitter, combined with the mode selection mechanism (see Section \ref{mode_selection}), makes the D2D link a Markovian service process. Let $r_{_{t}}$ be a constant transmission rate of $D_T$; then, there emerges four Markov states, which can be expressed as shown in Table \ref{states}.

\begin{table}[b]
\centering
\caption{Markov chain representation of states without CSIT}
\label{states}
\begin{tabular}{|c|c|c|c|}
\hline
\textbf{States} & \textbf{Description}                                                                             & \textbf{Notation}                                                            & \textbf{Action}                                                  \\ \hline
$s_1$            & \begin{tabular}[c]{@{}c@{}}RIS-assisted D2D mode is\\ selected and link is ON.\end{tabular}       & \begin{tabular}[c]{@{}c@{}}$H_0$ and $r_{_{t}} < C_{d}^{^{\text{RIS}}}(n)$\end{tabular}    & \begin{tabular}[c]{@{}c@{}}Tx with $r_{_{t}}$\end{tabular} \\ \hline
$s_2$           & \begin{tabular}[c]{@{}c@{}}RIS-assisted D2D mode is\\ selected and link is OFF.\end{tabular}      & \begin{tabular}[c]{@{}c@{}}$H_0$ and $r_{_{t}} > C_{d}^{^{\text{RIS}}}(n)$\end{tabular}    & \begin{tabular}[c]{@{}c@{}}No Tx\end{tabular} \\ \hline
$s_3$            & \begin{tabular}[c]{@{}c@{}}RIS-assisted Cellular mode is\\ selected and link is ON.\end{tabular}  & \begin{tabular}[c]{@{}c@{}}$H_1$ and $r_{_{t}} < C_{c}^{^{\text{RIS}}}(n)$\end{tabular}    & \begin{tabular}[c]{@{}c@{}}Tx with $r_{_{t}}$ \end{tabular} \\ \hline
$s_4$            & \begin{tabular}[c]{@{}c@{}}RIS-assisted Cellular mode is\\ selected and link is OFF.\end{tabular} & \begin{tabular}[c]{@{}c@{}}$H_1$ and $r_{_{t}} > C_{c}^{^{\text{RIS}}}(n)$\end{tabular}    & \begin{tabular}[c]{@{}c@{}}No Tx\end{tabular} \\ \hline
\end{tabular}
\end{table}
$C_{d}^{^{\text{RIS}}}(n)$ and $C_{c}^{^{\text{RIS}}}(n)$ are the channel capacities of the RIS-assisted D2D and RIS-assisted cellular links, respectively, for time slot $n$. When $r>C_{d}^{^{\text{RIS}}}(n)$ or $r>C_{c}^{^{\text{RIS}}}(n)$ there emerges a backlog of data at the transmitter's queue. Due to the finite-sized nature of the queue, this data backlog will choke the queue at steady-state position. Therefore, we consider a state as being in the OFF condition when the transmission rate is larger than the instantaneous channel capacity of the respected link. Thus, $s_1$ and $s_3$ states are considered to be in the ON state, and $s_2$ and $s_4$ are considered to be in the OFF state. In short, transmission is only possible in two states and in remaining two states $D_T$ does not transmit data, as shown in Fig. \ref{queue}. Now, to find the state transition probabilities, we have to find $C_{d}^{^{\text{RIS}}}(n)$ and $C_{c}^{^{\text{RIS}}}(n)$. Additionally, we note that D2D communication in cellular networks can be done either in overlay settings (with orthogonal channel assignment to cellular and D2D users) or in underlay settings (non-orthogonal channel assignment to cellular and D2D users). Therefore, we will find $C_{d}^{^{\text{RIS}}}(n)$ and $C_{c}^{^{\text{RIS}}}(n)$ for both the settings.
\begin{figure}[ht]
\begin{center}
	\includegraphics[width=4in]{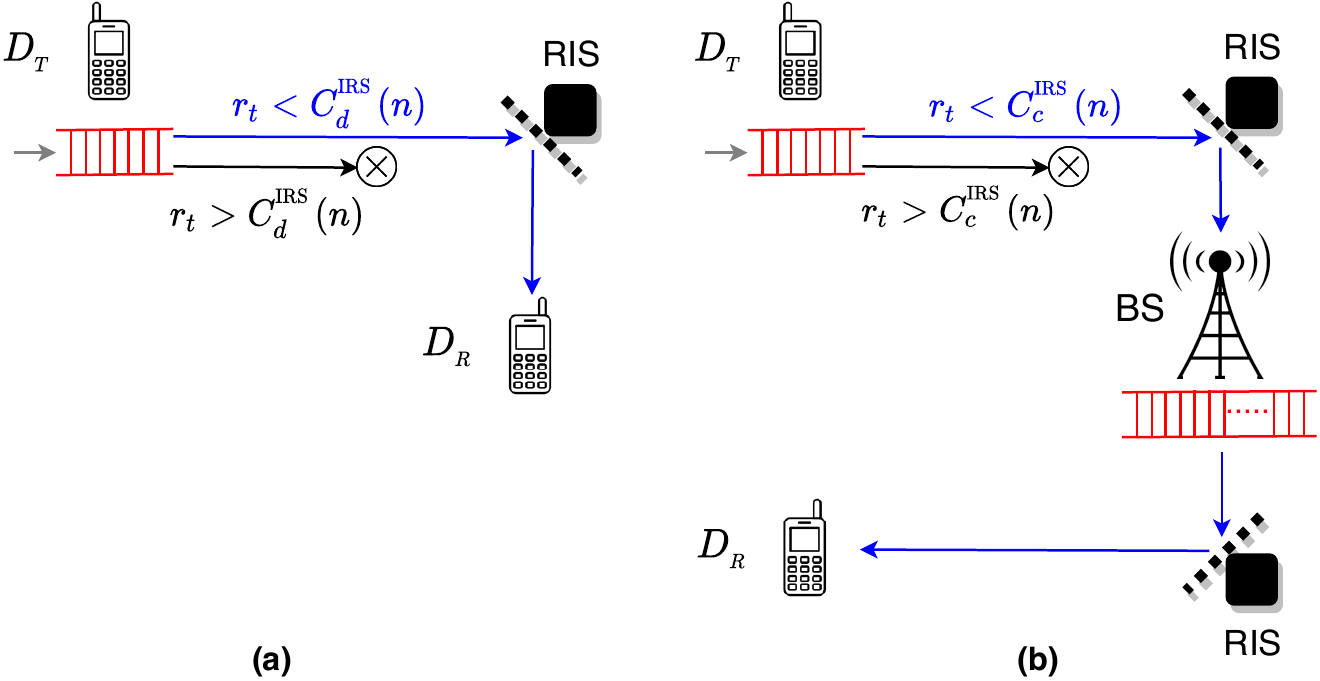}
\caption{Transmission with CSI available at $D_R$ only: (a) RIS-assisted D2D mode; (b) RIS-assisted cellular mode.}
\label{queue}
\end{center}
\end{figure}
\subsection{EC of RIS-assisted Underlay-D2D}
In underlay D2D settings, $D_T$ transmits data by reusing cellular user's resources. Therefore, transmitted signal experience interference from $U_T$. In this case, the instantaneous channel capacity of $D_T \to$ RIS $\to D_R$ link can be written as,
\begin{equation}\label{C_d_underlay}
  C_{d,u}^{^{\text{RIS}}}(n) = B\log_2(1+ \Gamma_d^{^{\text{RIS}}}(n))
\end{equation}
where $\Gamma_d^{^{\text{RIS}}}(n)$ is the SINR of the $D_T \to$ RIS $\to D_R$ link. This SINR can be written as,
\begin{equation}\label{sinr_d}
  \Gamma_d^{^{\text{RIS}}}(n) = \frac{\bar{P}_{_{D_T}}\big(\sum_{m_z,m_y}h^{^{D_T,D_R}}_{m_z,m_y}(n)e^{j\phi_{m_z,m_y}}\big)^2\big/PL_d}{\bar{P}_{_{U_T}}\big(h_{_{U_T,D_R}})^2\big/PL_{_{U_T,D_R}}+\omega_{_{0}}}
\end{equation}
where $\bar{P}_{_{D_T}}$ and $\bar{P}_{_{U_T}}$ are the average transmit powers of $D_T$ and $U_T$, respectively. $PL_d$ ($PL_{_{U_T,D_R}}$) and $h^{^{D_T,D_R}}_{m_z,m_y}$ ($h_{_{U_T,D_R}}$) are the path loss and channel coefficients of $D_T \to$ RIS $\to D_R$ ($U_T \to D_R$) link, respectively. $\omega_{_{0}}$ is the noise variance of the Additive White Gaussian Noise.

Similarly, for the case of RIS-assisted cellular link, the instantaneous channel capacity can be written as,
\begin{equation}\label{c_c_underlay}
  C_{c,u}^{^{\text{RIS}}}(n) = 0.5\min\{C_{ul,u}^{^{\text{RIS}}}(n), C_{dl,u}^{^{\text{RIS}}}(n)\}
\end{equation}
where $C_{ul,u}^{^{\text{RIS}}}(n)$ and $C_{dl,u}^{^{\text{RIS}}}(n)$ are the instantaneous channel capacities of the uplink ($D_T \to$ RIS $\to$ BS) and the downlink (BS $\to$ RIS $\to D_R$) channels, respectively. The constant 0.5 is due to the fact that the transmission in RIS-assisted cellular link is a two-hop communication and consumes two time slots. Individual channel capacities of the uplink and downlink can be written as,
\begin{subequations}\label{c_ul_dl_underlay}
  \begin{align}
  C_{ul,u}^{^{\text{RIS}}}(n) &= B\log_2(1+\Gamma_{ul}^{^{\text{RIS}}}(n))\\
  C_{dl,u}^{^{\text{RIS}}}(n) &= B\log_2(1+\Gamma_{dl}^{^{\text{RIS}}}(n))
  \end{align}
\end{subequations}
where $\Gamma_{ul}^{^{\text{RIS}}}(n)$ and $\Gamma_{dl}^{^{\text{RIS}}}(n)$ are the SINRs of the uplink and the downlink channels, respectively. These SINRs can be written as follows:
\begin{equation}\label{sinr_ul_dl}
\begin{split}
\Gamma_{ul}^{^{\text{RIS}}}(n)&=\frac{\bar{P}_{_{D_T}}\big(\sum_{m_z,m_y}h^{^{D_T,\text{BS}}}_{m_z,m_y}(n)e^{j\phi_{m_z,m_y}}\big)^2\big/PL_{_{D_T,\text{BS}}}}{\bar{P}_{_{U_T}}\big(h_{_{U_T,\text{BS}}})^2\big/PL_{_{U_T,\text{BS}}}+\omega_{_{0}}}\\
\Gamma_{dl}^{^{\text{RIS}}}(n)&=\frac{\bar{P}_{_{\text{BS}}}\big(\sum_{m_z,m_y}h^{^{\text{BS},D_R}}_{m_z,m_y}(n)e^{j\phi_{m_z,m_y}}\big)^2\big/PL_{_{\text{BS},D_R}}}{\bar{P}_{_{U_T}}\big(h_{_{U_T,D_R}})^2\big/PL_{_{U_T,D_R}}+\omega_{_{0}}}
\end{split}
\end{equation}
where $\bar{P}_{_{\text{BS}}}$ is the average transmit power of the BS. $PL_{_{D_T,\text{BS}}}$ ($PL_{_{\text{BS},D_R}}$) and $h^{^{D_T,\text{BS}}}_{m_z,m_y}$ ($h^{^{\text{BS},D_R}}_{m_z,m_y}$) are the path loss and channel coefficients of the uplink (downlink) channel, respectively. $PL_{_{U_T,\text{BS}}}$ and $h_{_{U_T,\text{BS}}}$ are the path loss and channel coefficients of the $U_T \to$ BS link (uplink interference link). Now, by substituting results from \eqref{c_ul_dl_underlay} and \eqref{sinr_ul_dl} into \eqref{c_c_underlay}, the instantaneous channel capacity of the RIS-assisted cellular link becomes the following:
\begin{equation}\label{c_c_underlay_final}
  C_{c,u}^{^{\text{RIS}}}(n) = 0.5B\log_2(1+\Gamma_c^{^{\text{RIS}}}(n))
\end{equation}
where $\Gamma_c^{^{\text{RIS}}}(n) = \min\{\Gamma_{ul}^{^{\text{RIS}}}(n), \Gamma_{dl}^{^{\text{RIS}}}(n)\}$.

Next, we find the state transition probabilities for states $s_1$, $s_2$, $s_3$, and $s_4$, given in Table \ref{states}. Note that due to the block fading nature of the D2D channel, state change only happens after a block length ($T_b$). Let $p_{c,d}$ be the transition probability from state $c$ to state $d$; then, the transition probability from $s_1$ to $s_2$ in underlay settings can be written as:
\begin{subequations}\label{transition_prob1}
\begin{align}
  p_{1,1}^{u} = P[\tau|H_0&(n)<\delta \enspace \text{and} \enspace r_t<C_{d,u}^{^{\text{RIS}}}(n)\big|\tau|H_0(n-1)<\delta \enspace \text{and} \enspace r_t<C_{d,u}^{^{\text{RIS}}}(n-1)] \\
    \labelrel={a} P[\tau|H_0&(n)<\delta \enspace \text{and} \enspace \Gamma_d^{^{\text{RIS}}}(n)>\gamma_{_{T}}\big|\tau|H_0(n-1)<\delta \enspace \text{and} \enspace \Gamma_d^{^{\text{RIS}}}(n-1)>\gamma_{_{T}}]\\
  \labelrel={b} P[\tau|H_0&(n)<\delta\big|\tau|H_0(n-1)<\delta]P[\Gamma_d^{^{\text{RIS}}}(n)>\gamma_{_{T}}\big|\Gamma_d^{^{\text{RIS}}}(n-1)>\gamma_{_{T}}]\\
  \labelrel={c} P[\tau|H_0&(n)<\delta]\enspace P[\Gamma_d^{^{\text{RIS}}}(n)>\gamma_{_{T}}]\\
  \labelrel={d} P[\tau|H_0&(n)<\delta]\enspace P[\Psi_d^{^{\text{RIS}}}(n)>\gamma_{_{T}}].
\end{align}
\end{subequations}
where $\delta = \log_e (\pi_0/\pi_1)\sigma_{\tau}^2/2m_{\tau}$, $\gamma_{_{T}} = 2^{r_t/B}-1$ is the required SINR (threshold SINR) for the transmission, and $\Psi_d^{^{\text{RIS}}}(n)$ is the signal-to-interference-ratio (SIR) of the RIS-assisted D2D link. Equality \eqref{a} follows from the fact that the condition on the transmission rate can be translated into a condition on the SINR of the RIS-assisted D2D link. Equality \eqref{b} holds because the mode selection mechanism and the fading process are independent of each other. Equality \eqref{c} is due to the fact that the mode selection mechanism, as well as the fading process, change independently among time slots and thus, are memoryless stochastic processes. Moreover, we consider an interference-limited scenario in which we neglect the effects of noise on the transmitted signal to find the SIR of the RIS-assisted D2D link. Equality \eqref{d} follows from this assumption. Next, we know that $P[\tau|H_0(n)<\delta] = P[H_0|H_0]\pi_0 + P[H_0|H_1]\pi_1$. For $P[\Psi_d^{^{\text{RIS}}}(n)<\gamma_{_{T}}]$, we have the following proposition \ref{prop_outage_d}.
\begin{prop}\label{prop_outage_d}
  The outage probability of the SIR of RIS-assisted D2D link in underlay settings is,
  \begin{equation*}
    P[\Psi_d^{^{\text{RIS}}}(n)<\gamma_{_{T}}] = \frac{\bar{P}_{_{U_T}}PL_{d}\gamma_{_{T}}}{PL_{d}\bar{P}_{_{U_T}} + PL_{_{U_T,D_R}}\bar{P}_{_{D_T}}N\pi}.
  \end{equation*}
\end{prop}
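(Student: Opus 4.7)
The plan is to evaluate $P[\Psi_d^{\text{RIS}}(n)<\gamma_T]$ directly from its definition by dropping the noise term $\omega_0$ from \eqref{sinr_d} and reducing the outage event to a threshold crossing for a ratio of two independent positive random variables. Concretely, I would write
\[
\Psi_d^{\text{RIS}}(n)=\frac{S}{I},\qquad S=\frac{\bar P_{D_T}\bigl|\sum_{m_z,m_y}h_{m_z,m_y}^{D_T,D_R}e^{j\phi_{m_z,m_y}}\bigr|^2}{PL_d},\qquad I=\frac{\bar P_{U_T}|h_{U_T,D_R}|^2}{PL_{U_T,D_R}},
\]
and rephrase the outage as $\{S<\gamma_T I\}$.

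First I would characterize the interference: the link $U_T\to D_R$ is a direct (non-RIS) link whose small-scale fading is complex Gaussian, so $|h_{U_T,D_R}|^2$ is exponentially distributed, and its mean follows from the paper's normalization. Second, I would characterize the signal. With the optimal phase shifts from Appendix \ref{prop_phase_shift}, the reflected contributions in $\sum h_{m_z,m_y}^{D_T,D_R}e^{j\phi_{m_z,m_y}}$ align coherently; the common $\sqrt{PL_d}$ factor embedded in \eqref{LoS_and_NLoS_D2D} pulls outside the sum and cancels the $1/PL_d$ in the SIR expression, leaving an aggregate that depends only on the per-element small-scale fading. I would then use the Rician channel model together with the Rayleigh magnitude identity $\mathbb{E}[|\xi|]=\sqrt{\pi}/2$ and the RIS array size to obtain the $N\pi$ factor that appears in the denominator of the claim.

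With the two mean powers $\bar S$ and $\bar I$ in hand, I would invoke the standard identity for a ratio of independent exponentials,
\[
P\!\left[\frac{S}{I}<\gamma_T\right]=\frac{\gamma_T\,\bar I}{\bar S+\gamma_T\,\bar I},
\]
substitute $\bar S\propto \bar P_{D_T}\,N\pi/PL_d$ and $\bar I=\bar P_{U_T}/PL_{U_T,D_R}$, and clear denominators. After multiplying numerator and denominator by $PL_d\cdot PL_{U_T,D_R}$, the claimed closed form
$P[\Psi_d^{\text{RIS}}(n)<\gamma_T]=\bar P_{U_T}PL_d\gamma_T/(PL_d\bar P_{U_T}+PL_{U_T,D_R}\bar P_{D_T}N\pi)$ should drop out.

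The main obstacle will be the signal-statistics step, namely justifying that $\mathbb{E}\bigl[|\sum h_{m_z,m_y}^{D_T,D_R}e^{j\phi_{m_z,m_y}}|^2/PL_d\bigr]$ collapses cleanly to the factor $N\pi$ that appears in the denominator, rather than the $N^2\pi/4$ or $N^4\pi/4$ that a naive coherent-combining computation would suggest. Pinning this down requires careful bookkeeping of the paper's channel normalization (whether $\sqrt{PL_d}$ is absorbed per-element or once for the aggregate), of the distributional assumption on $\xi^{D_T,D_R}_{NLoS}$ used in the interference-limited regime, and very likely of an exponential approximation for the aggregate signal power that matches its first moment — exactly the approximation needed for the ratio-of-exponentials identity to yield the stated rational form. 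Once that mean is fixed, the rest of the derivation is algebraic rearrangement.
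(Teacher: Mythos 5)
Your overall route is the same as the paper's: Appendix \ref{prop1} also drops the noise term, writes $\Psi_d^{^{\text{RIS}}}=\psi_d^{^{\text{RIS}}}/I_d$, models both the aggregate reflected signal power and the interference power as independent exponential random variables (citing Appendix A of \cite{aman2020effective}, under the zero-phase-error assumption $\phi_{m_z,m_y}=0$, for the first moment $N\pi\bar{P}_{D_T}/PL_d$ that you were worried about --- so your concern about the $N\pi$ scaling is legitimate but is resolved in the paper by citation rather than derivation, together with exactly the ``exponential matched to the first moment'' approximation you anticipated), and then evaluates a CDF of the ratio.

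The genuine problem is your final step. The exact identity you invoke,
\begin{equation*}
P\!\left[\tfrac{S}{I}<\gamma_T\right]=\frac{\gamma_T\,\bar I}{\bar S+\gamma_T\,\bar I},
\end{equation*}
is the correct CDF for a ratio of independent exponentials, but substituting $\bar S=N\pi\bar{P}_{D_T}/PL_d$ and $\bar I=\bar{P}_{U_T}/PL_{U_T,D_R}$ and clearing denominators gives
\begin{equation*}
\frac{\bar{P}_{U_T}PL_d\,\gamma_T}{\gamma_T\,PL_d\bar{P}_{U_T}+PL_{U_T,D_R}\bar{P}_{D_T}N\pi},
\end{equation*}
which carries an extra factor of $\gamma_T$ on the first denominator term relative to the proposition. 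The stated closed form therefore does \emph{not} ``drop out'' of your identity by algebraic rearrangement. The paper instead writes the CDF as $\alpha_1/((\alpha_1+\alpha_2)/\gamma_T)=\alpha_1\gamma_T/(\alpha_1+\alpha_2)$ with $\alpha_1,\alpha_2$ the reciprocal means (rate parameters), which agrees with the exact CDF only when $\gamma_T$ is dropped from one term (a small-threshold approximation; note it exceeds $1$ for large $\gamma_T$). So to reach the proposition as stated you would have to adopt that nonstandard/approximate CDF explicitly; carried out exactly, your derivation proves a slightly different (and arguably more correct) expression than the one claimed.
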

\begin{proof}
  Given in Appendix \ref{prop1}.
\end{proof}
Note further that the state transition probability does not depend on the original state. Therefore, $p_{1,1}^{u} = p_1^{u}$ becomes the following
\begin{equation}\label{p_1}
  p_{1,1}^{u} = p_1^{u} = \bigg(P[H_0|H_0]\pi_0 + P[H_0|H_1]\pi_1\bigg) \enspace P[\Psi_d^{^{\text{RIS}}}(n)>\gamma_{_{T}}].
\end{equation}
Similarly, state transition probabilities for states $s_2$, $s_3$, and $s_4$ becomes:
\begin{equation}\label{p_2_3_4}
  \begin{split}
  p_{c,2}^{u} = p_2^{u} &= \bigg(P[H_0|H_0]\pi_0 + P[H_0|H_1]\pi_1\bigg) \enspace P[\Psi_d^{^{\text{RIS}}}(n)<\gamma_{_{T}}]\\
  p_{c,3}^{u} = p_3^{u} &= \bigg(P[H_1|H_0]\pi_0 + P[H_1|H_1]\pi_1\bigg) \enspace P[\Psi_c^{^{\text{RIS}}}(n)>\gamma_{_{T}}]\\
  p_{c,4}^{u} = p_4^{u} &= \bigg(P[H_1|H_0]\pi_0 + P[H_1|H_1]\pi_1\bigg) \enspace P[\Psi_c^{^{\text{RIS}}}(n)<\gamma_{_{T}}]
  \end{split}
\end{equation}
where $\Psi_c^{^{\text{RIS}}}(n)$ is the SIR of the RIS-assisted cellular link, and $P[\Psi_d^{^{\text{RIS}}}(n)>\gamma_{_{T}}] = 1-P[\Psi_d^{^{\text{RIS}}}(n)<\gamma_{_{T}}]$. To find $P[\Psi_c^{^{\text{RIS}}}(n)<\gamma_{_{T}}]$, we have proposition \ref{prop_outage_c}. Similarly, $P[\Psi_c^{^{\text{RIS}}}(n)>\gamma_{_{T}}] = 1-P[\Psi_c^{^{\text{RIS}}}(n)<\gamma_{_{T}}]$.

\begin{prop}\label{prop_outage_c}
  The outage probability of the SIR of the RIS-assisted cellular link in underlay settings is,
\begin{equation*}
\begin{split}
&P[\Psi_c^{^{\text{RIS}}}(n)<\gamma_{_{T}}] =\frac{\bar{P}_{_{U_T}}\gamma_{_{T}}\big[PL_{_{D_T,\text{BS}}}\big\{\bar{P}_{_{U_T}}PL_{_{\text{BS},D_R}}(2-\gamma_{_{T}})+\Omega_1\big\}-PL_{_{\text{BS},D_R}}\Omega_2\big]}{(\bar{P}_{_{U_T}}PL_{_{D_T,\text{BS}}} + \Omega_1)(\bar{P}_{_{U_T}}PL_{_{\text{BS},D_R}} +\Omega_2)}.    
\end{split}
\end{equation*}
Where $\Omega_1 =  \bar{P}_{_{D_T}}PL_{_{U_T,\text{BS}}}N\pi$ and $\Omega_2 =  \bar{P}_{_{\text{BS}}}PL_{_{U_T,D_R}}N\pi$.
\end{prop}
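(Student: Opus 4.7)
The plan is to reduce the cellular-link outage to two one-hop outages of the same form as in Proposition~\ref{prop_outage_d}, and then combine them via independence of the two hops. In the interference-limited regime already invoked for Proposition~\ref{prop_outage_d}, the noise $\omega_{_{0}}$ is dropped from the SINR expressions in \eqref{sinr_ul_dl}, so that $\Psi_c^{^{\text{RIS}}}(n) = \min\{\Psi_{ul}^{^{\text{RIS}}}(n),\Psi_{dl}^{^{\text{RIS}}}(n)\}$. Because the small-scale fading on the four channels involved ($D_T\to\text{RIS}\to\text{BS}$, $\text{BS}\to\text{RIS}\to D_R$, $U_T\to\text{BS}$, $U_T\to D_R$) is drawn from independent $\mathcal{CN}(0,1)$ variables at distinct pairs of endpoints, the uplink and downlink SIRs are independent, so the first step gives
\[
P[\Psi_c^{^{\text{RIS}}}<\gamma_{_{T}}] \;=\; 1 - \bigl(1-P[\Psi_{ul}^{^{\text{RIS}}}<\gamma_{_{T}}]\bigr)\bigl(1-P[\Psi_{dl}^{^{\text{RIS}}}<\gamma_{_{T}}]\bigr).
\]

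Next I would compute each one-hop outage by replaying the derivation of Proposition~\ref{prop_outage_d} with the obvious relabellings. On the uplink the desired signal carries $\bar{P}_{_{D_T}}$ through the VLoS link with path loss $PL_{_{D_T,\text{BS}}}$ and the interference carries $\bar{P}_{_{U_T}}$ through path loss $PL_{_{U_T,\text{BS}}}$; on the downlink the desired signal carries $\bar{P}_{_{\text{BS}}}$ through $PL_{_{\text{BS},D_R}}$ while the interference still carries $\bar{P}_{_{U_T}}$ but now through $PL_{_{U_T,D_R}}$. The coherent-combination argument that yields the $N\pi$ factor in Proposition~\ref{prop_outage_d} applies hop-by-hop, since the RIS phase shifts derived in Appendix~\ref{prop_phase_shift} are configured to align the VLoS beam on whichever hop is active. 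The analogue of Proposition~\ref{prop_outage_d} therefore gives
\[
P[\Psi_{ul}^{^{\text{RIS}}}<\gamma_{_{T}}] = \frac{\bar{P}_{_{U_T}}PL_{_{D_T,\text{BS}}}\gamma_{_{T}}}{\bar{P}_{_{U_T}}PL_{_{D_T,\text{BS}}}+\Omega_1},\qquad P[\Psi_{dl}^{^{\text{RIS}}}<\gamma_{_{T}}] = \frac{\bar{P}_{_{U_T}}PL_{_{\text{BS},D_R}}\gamma_{_{T}}}{\bar{P}_{_{U_T}}PL_{_{\text{BS},D_R}}+\Omega_2},
\]
with $\Omega_1$ and $\Omega_2$ as defined in the statement.

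The final step is purely algebraic: substitute these two rational expressions into the independence identity above, place everything over the common denominator $(\bar{P}_{_{U_T}}PL_{_{D_T,\text{BS}}}+\Omega_1)(\bar{P}_{_{U_T}}PL_{_{\text{BS},D_R}}+\Omega_2)$, and collect the $\gamma_{_{T}}$-linear and $\gamma_{_{T}}^2$ contributions. The $(2-\gamma_{_{T}})$ factor multiplying $\bar{P}_{_{U_T}}^2 PL_{_{D_T,\text{BS}}}PL_{_{\text{BS},D_R}}$ in the stated numerator arises from combining the two $\gamma_{_{T}}$-linear terms (one from each marginal) with the $\gamma_{_{T}}^2$ cross term produced by $P[\Psi_{ul}^{^{\text{RIS}}}<\gamma_{_{T}}]\cdot P[\Psi_{dl}^{^{\text{RIS}}}<\gamma_{_{T}}]$; the cross-products $\bar{P}_{_{U_T}}PL_{_{D_T,\text{BS}}}\,\Omega_2$ and $\bar{P}_{_{U_T}}PL_{_{\textnormal{BS}},D_R}\,\Omega_1$ pick up the remaining numerator contributions.

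The main obstacle is the bookkeeping in this last step: the answer must be organized into the specific asymmetric rational form displayed in the statement, and tracking which $\Omega_i$ appears in which term requires care. The conceptual content lives entirely in the first two steps; the independence used in step one is standard for multipath fading at physically separated antennas, and each one-hop outage is obtained by the same route as Proposition~\ref{prop_outage_d}, so neither introduces new difficulty.
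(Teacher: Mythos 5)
Your route is the same as the paper's: write $\Psi_c^{^{\text{RIS}}}(n)=\min\{\Psi_{ul}^{^{\text{RIS}}}(n),\Psi_{dl}^{^{\text{RIS}}}(n)\}$, invoke independence of the two hops so that, with $P_{ul}:=P[\Psi_{ul}^{^{\text{RIS}}}<\gamma_{_{T}}]$ and $P_{dl}:=P[\Psi_{dl}^{^{\text{RIS}}}<\gamma_{_{T}}]$, one has $P[\Psi_c^{^{\text{RIS}}}<\gamma_{_{T}}]=P_{ul}+P_{dl}-P_{ul}P_{dl}$, and obtain each marginal by replaying Proposition~4.1 with relabelled powers and path losses. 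Your two marginal expressions coincide exactly with the ones in the paper's appendix, so the conceptual content is fine and identical to the paper's.

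The place the proposal does not close is the step you dismiss as ``purely algebraic.'' Substituting
\[
P_{ul}=\frac{\bar{P}_{_{U_T}}PL_{_{D_T,\text{BS}}}\gamma_{_{T}}}{\bar{P}_{_{U_T}}PL_{_{D_T,\text{BS}}}+\Omega_1},\qquad
P_{dl}=\frac{\bar{P}_{_{U_T}}PL_{_{\text{BS},D_R}}\gamma_{_{T}}}{\bar{P}_{_{U_T}}PL_{_{\text{BS},D_R}}+\Omega_2}
\]
into $P_{ul}+P_{dl}-P_{ul}P_{dl}$ and clearing the common denominator yields the numerator
\[
\bar{P}_{_{U_T}}\gamma_{_{T}}\Big[\bar{P}_{_{U_T}}PL_{_{D_T,\text{BS}}}PL_{_{\text{BS},D_R}}(2-\gamma_{_{T}})+PL_{_{D_T,\text{BS}}}\Omega_2+PL_{_{\text{BS},D_R}}\Omega_1\Big],
\]
i.e.\ precisely the cross-products $\bar{P}_{_{U_T}}PL_{_{D_T,\text{BS}}}\Omega_2$ and $\bar{P}_{_{U_T}}PL_{_{\text{BS},D_R}}\Omega_1$ that you yourself identify. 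But that is \emph{not} the numerator displayed in the proposition, which carries $PL_{_{D_T,\text{BS}}}\Omega_1-PL_{_{\text{BS},D_R}}\Omega_2$ in those two slots. So your derivation is internally inconsistent with its stated target: you claim the bookkeeping ``organizes into the specific asymmetric rational form displayed in the statement,'' yet the terms you correctly compute do not reduce to that form (the discrepancy traces to a sign/index slip in the paper's own final simplification). A complete write-up must either present the $+PL_{_{D_T,\text{BS}}}\Omega_2+PL_{_{\text{BS},D_R}}\Omega_1$ numerator as the result, or explicitly flag that the printed expression is not what the two marginals combine to give; asserting that the algebra lands on the displayed formula is the one step of your proof that would fail if carried out.
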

\begin{proof}
  Given in Appendix \ref{prop2}.
\end{proof}
With this, each row of the state transition probability matrix in underlay settings $\mathbf{P}^{u}$ becomes $[p_1^u,p_2^u,p_3^u,p_4^u]$. Because of the identical rows, rank of matrix $\mathbf{P}^u$ becomes 1. Next, we utilize matrix $\mathbf{P}^u$ to find the EC of RIS-assisted D2D communication in underlay settings.

The EC of a point-to-point link was first proposed in \cite{wu2003effective} as:
\begin{equation}\label{EC_main}
  EC(\varphi) = \frac{-\Lambda(-\varphi)}{\varphi} = -\lim_{t\to \inf}\frac{1}{\varphi t}\log_e E [e^{-\varphi S(t)}]
\end{equation}
where $S(t)$ is the cumulative service process and can be defined as $\sum_{n=1}^{t}s(n)$ and where $s(n)$ is the channel service. $\varphi$ is the QoS exponent and can be expressed in terms of delay violation probability at the transmitter's queue. In our analysis, service process is a Markov modulated process; therefore, $\frac{-\Lambda(-\varphi)}{\varphi}$ can be written as $\frac{1}{\varphi}\log_e \text{sp}(\Phi(\varphi)\mathbf{P})$. This result holds for any scenario in which service processes can be regarded as Markov modulated processes \cite{chang2012performance}. Thus, the EC in our analysis can be calculated using the following results:
\begin{equation}\label{EC_sp_main}
  EC_u^{^{\text{RIS}}}(\varphi) = \frac{-\Lambda(-\varphi)}{\varphi} = \frac{1}{\varphi}\log_e \text{sp}(\Phi(\varphi)\mathbf{P}^u)
\end{equation}
where $\text{sp}(\cdot)$ is the spectral radius (largest absolute eigenvalue) of $\Phi(\varphi)\mathbf{P}^u$; $\Phi(\varphi)$ is the diagonal matrix of the log moment generating functions (LMGFs) of the four states given in Table \ref{states}; and $\mathbf{P}^u$ is the state transition probability matrix given above.

From Table \ref{states}, we can see that for state $s_1$ and $s_3$, transmission is possible with $r_t$. Therefore, $s(n) = r_t$ for both of these states; consequently, MGF for both of these states becomes $e^{r_t\varphi}$. On the other hand, for states $s_2$ and $s_4$, the link is OFF and no transmission is possible. Therefore, $s(n) = 0$ for both of these states; consequently, MGF for these states becomes 1. With these MGFs for the four states, $\Phi(\varphi) = \text{diag}[e^{r_t\varphi}, 1, e^{r_t\varphi}, 1]$. Next, to find $\Phi(\varphi)\mathbf{P}$, we note that it is also a unit rank matrix. Therefore, finding trace of $\Phi(\varphi)\mathbf{P}$ is equivalent to finding its spectral radius. Thus, the EC of RIS-assisted D2D communication in underlay settings is:
\begin{equation}\label{EC_overlay_final}
  EC_u^{^{\text{RIS}}}(\varphi) = \frac{-1}{\varphi}[\log_e(p_1^ue^{-r_t\varphi}+ p_2^u+ p_3^ue^{-r_t\varphi}+ p_4^u)].
\end{equation}

From \eqref{EC_overlay_final}, we can also find the optimal transmission rate $r_t^{\text{opt}}$ to maximize $EC_u^{^{\text{RIS}}}(\varphi)$ as follows:
$r_t^{\text{opt}} = \arg \max_{r_t >0}EC_u^{^{\text{RIS}}}(\varphi)$. \footnote{The optimal transmission rate $r_t^{\text{opt}}$ has to be recomputed every time path loss of the RIS-assisted D2D link or the RIS-assisted uplink of the RIS-assisted cellular mode changes. This change can happen either due to the D2D users's mobility or due to change in the RIS elements' configuration.} This maximization problem can be translated into the following minimization problem:
\begin{equation}
    r_t^{\text{opt}} = \arg \min_{r_t >0} \{p_1^ue^{-r_t\varphi}+ p_2^u+ p_3^ue^{-r_t\varphi}+ p_4^u\}.
\end{equation}
Further, we can see from Table \ref{states} that $s_2$ and $s_4$ are OFF states and no transmission is possible during an OFF state. Therefore, probabilities $p^u_2$ and $p^u_4$ do not contribute towards finding the optimal transmission rate. After discarding the irrelevant terms, the final minimization problem for the optimal transmission rate becomes:
\begin{equation}\label{opt_rate}
    r_t^{\text{opt}} = \arg \min_{r_t >0} \{e^{-r_t\varphi}(p_1^u+p_3^u)\}.
\end{equation}
Let $C_f = e^{-r_t\varphi}(p_1^u+p_3^u)$ is the cost function. One can verify that $C_f$ is a convex function \cite{brychkov2012some}. As $C_f$ is differentiable w.r.t $r_t$; therefore, to compute the optimal transmission rate $r_t^{\text{opt}}$, we use iterative gradient decent (GD) method. The control law for the GD method in our case is as follows:
\begin{equation}
    r_t(k+1) = r_t(k)-\beta \nabla\big|_{r_t(k)}
\end{equation}
where $\beta$ is the step size, $k$ is the number of the iteration, and $\nabla$ is the gradient of $C_f$. The gradient can be found by taking the derivative of $C_f$ w.r.t $r_t$, which is as follows: $\nabla = \frac{\partial C_f}{\partial r_t} =-\varphi(p_1^u+p_3^u)e^{-r_t\varphi}$.
\begin{remark}
The RIS-assisted cellular mode transfer data from $D_T$ to $D_R$ using two-hop communication link. This implies two queues in the network; one at $D_T$ and the other at the BS. To this end, we assume that the BS has infinite-sized queue (as shown in Fig. \ref{queue}) and infinite resources (as compared to cellular users); therefore, problem of queue overflow does not occur at the BS. However, in case of multi-tier cellular networks, if the D2D users are present in a micro/pico/femto cell, then the associated BS does not necessarily have infinite resources or infinite-sized queue (buffer). In such case, one has to recompute the EC of RIS-assisted cellular mode using the results in \cite{qiao2012effective}.
\end{remark}
\subsection{EC of RIS-assisted Overlay-D2D}
In overlay-D2D settings, $D_T$ transmit data using dedicated resources, which are orthogonal to the resources allocated to the cellular users; hence, the transmitted signal experience no interference from the cellular users. Therefore, to compute the instantaneous channel capacities in overlay settings ($C_{d,o}^{^{\text{RIS}}}(n)$ and $C_{c,o}^{^{\text{RIS}}}(n)$), we compute the signal-to-noise-ratio (SNR) of the corresponding channels. When the D2D link operates in RIS-assisted D2D mode, the SNR is given as: $\gamma_d^{^{\text{RIS}}}(n) =\bar{P}_{_{D_T}}\big(\sum_{m_z,m_y}h^{^{D_T,D_R}}_{m_z,m_y}(n)e^{j\phi_{m_z,m_y}}\big)^2/PL_{d}\omega_{_{0}}$. Similarly, when D2D link operates in RIS-assisted cellular mode, the SNR of uplink ($D_T \to$ RIS $\to$ BS) and downlink (BS $\to$ RIS $\to D_R$) channels are given as: $\gamma_{ul}^{^{\text{RIS}}}(n) = \bar{P}_{_{D_T}}\big(\sum_{m_z,m_y}h^{^{D_T,\text{BS}}}_{m_z,m_y}(n)e^{j\phi_{m_z,m_y}}\big)^2/PL_{_{D_T,\text{BS}}}\omega_{_{0}}$ and $\gamma_{dl}^{^{\text{RIS}}}(n) =\bar{P}_{_{\text{BS}}}\big(\sum_{m_z,m_y}h^{^{\text{BS},D_R}}_{m_z,m_y}(n)e^{j\phi_{m_z,m_y}}\big)^2/PL_{_{\text{BS},D_R}}\omega_{_{0}}$. Where net SNR of the RIS-assisted cellular link becomes: $\gamma_{c}^{^{\text{RIS}}}(n) = \min \{\gamma_{ul}^{^{\text{RIS}}}(n),\gamma_{dl}^{^{\text{RIS}}}(n) \}$. By substituting these SNRs in \eqref{C_d_underlay} and \eqref{c_c_underlay_final}, we can find the instantaneous channel capacities of the RIS-assisted D2D and RIS-assisted cellular modes in overlay settings, respectively. Note further that, to compute the EC of overlay-D2D scenario, we require to recompute the four probabilities $p_1^{o}$, $p_2^{o}$, $p_3^{o}$, and $p_4^{o}$. We know that the SNR $\gamma_d^{^{\text{RIS}}}(n)$ is exponentially distributed, then the CDF of $\gamma_d^{^{\text{RIS}}}(n)$ becomes: $P[\gamma_d^{^{\text{RIS}}}(n)< \gamma_{_{T}}] = 1-e^{\frac{-\gamma_{_{T}}}{\kappa_d}}$. Where $\kappa_d = \E[\gamma_d^{^{\text{RIS}}}(n)]$ is the mean of $\gamma_d^{^{\text{RIS}}}(n)$. We assume that there is no phase error experienced by the reflected signal (i-e. $\phi_{m_z,m_y}=0$), then $\kappa_d$ can be written as: $\kappa_d = \frac{N\pi\bar{P}_{_{D_T}}}{PL_{d}\omega_{_{0}}}$ (Appendix A \cite{aman2020effective}). Similarly, $P[\gamma_d^{^{\text{RIS}}}(n)> \gamma_{_{T}}] = e^{\frac{-\gamma_{_{T}}}{\kappa_d}}$. Then, $p_{c,1}^{o}$ and $p_{c,2}^{o}$ becomes:
\begin{subequations}\label{p_1_2_overlay}
\begin{align}
p_{c,1}^{o} = p_1^o &= \bigg(P[H_0|H_0]\pi_0 + P[H_0|H_1]\pi_1\bigg)e^{\frac{-\gamma_{_{T}}PL_d\omega_{_{0}}}{N\pi\bar{P}_{_{D_T}}}}\\
p_{c,2}^{o} = p_2^o &=\bigg(P[H_0|H_0]\pi_0 + P[H_0|H_1]\pi_1\bigg)1-e^{\frac{-\gamma_{_{T}}PL_d\omega_{_{0}}}{N\pi\bar{P}_{_{D_T}}}}
\end{align}
\end{subequations}

To find $p_{c,3}^{o}$ and $p_{c,4}^{o}$, we have to find $P[\gamma_c^{^{\text{RIS}}}(n)> \gamma_{_{T}}]$ and $P[\gamma_c^{^{\text{RIS}}}(n)< \gamma_{_{T}}]$, respectively, and for that we have to find $\E[\gamma_c^{^{\text{RIS}}}(n)]$.
\begin{prop}\label{prop_SNR_c}
  Mean of the SNR of the RIS-assisted cellular link ($D_T \to$ RIS $\to$ BS $\to$ RIS $\to D_R$) is,
  \begin{equation}
  \E[\gamma_c^{^{\text{RIS}}}(n)] = \frac{N\pi\bar{P}_{_{D_T}}\bar{P}_{_{\text{BS}}}}{\bar{P}_{_{D_T}}PL_{_{\text{BS},D_R}}+\bar{P}_{_{\text{BS}}}PL_{_{D_T,\text{BS}}}}.
  \end{equation}
\end{prop}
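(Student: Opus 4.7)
The plan is to exploit the min-of-exponentials structure of $\gamma_c^{^{\text{RIS}}}(n) = \min\{\gamma_{ul}^{^{\text{RIS}}}(n),\gamma_{dl}^{^{\text{RIS}}}(n)\}$ and reduce the computation of its mean to a one-line algebraic combination of the two per-hop mean SNRs. The key observation is that $\gamma_{ul}^{^{\text{RIS}}}(n)$ and $\gamma_{dl}^{^{\text{RIS}}}(n)$ are structurally identical to $\gamma_d^{^{\text{RIS}}}(n)$ up to renaming of the transmit power, path loss, and channel coefficients. Consequently, the same derivation used immediately above the proposition to conclude that $\gamma_d^{^{\text{RIS}}}(n)$ is exponentially distributed with mean $\kappa_d = N\pi\bar{P}_{_{D_T}}/(PL_d\omega_0)$, borrowed from Appendix A of \cite{aman2020effective} and requiring zero phase error $\phi_{m_z,m_y}=0$, applies verbatim to each hop.

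First I would instantiate that result on each leg to obtain
\begin{equation*}
\kappa_{ul} := \E[\gamma_{ul}^{^{\text{RIS}}}(n)] = \frac{N\pi\bar{P}_{_{D_T}}}{PL_{_{D_T,\text{BS}}}\omega_0}, \qquad \kappa_{dl} := \E[\gamma_{dl}^{^{\text{RIS}}}(n)] = \frac{N\pi\bar{P}_{_{\text{BS}}}}{PL_{_{\text{BS},D_R}}\omega_0},
\end{equation*}
with $\gamma_{ul}^{^{\text{RIS}}}(n)$ and $\gamma_{dl}^{^{\text{RIS}}}(n)$ each exponentially distributed. Next I would argue independence: the uplink and downlink legs correspond to physically distinct propagation paths whose NLoS small-scale fading terms $\xi^{D_T,\text{BS}}_{NLoS}$ and $\xi^{\text{BS},D_R}_{NLoS}$ were introduced in \eqref{cellular_channel} as separate $\mathcal{CN}(0,1)$ variables, so the two SNRs may be treated as independent.

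I would then apply the standard identity that the minimum of two independent exponentials with rates $1/\kappa_{ul}$ and $1/\kappa_{dl}$ is itself exponential with rate $1/\kappa_{ul}+1/\kappa_{dl}$, yielding
\begin{equation*}
\E[\gamma_c^{^{\text{RIS}}}(n)] = \frac{\kappa_{ul}\kappa_{dl}}{\kappa_{ul}+\kappa_{dl}}.
\end{equation*}
Plugging in the two closed forms, cancelling the common factor $N\pi/\omega_0$ between numerator and denominator, and collecting the two path-loss terms in the denominator reproduces the claimed expression (with the $\omega_0$ factor absorbed into the normalisation convention already used for $\kappa_d$).

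The main obstacle I would flag is the independence step, because the two hops share the same RIS configuration and therefore the same deterministic phase shifts $\phi_{m_z,m_y}$; any residual correlation between the two coherent sums would break the minimum-of-exponentials identity. Once it is made explicit that the NLoS small-scale fading terms on the two legs are modelled as independent under the Rician block-fading assumption of Section II-B, the remainder is purely the one-line minimum-of-exponentials computation and a short algebraic simplification.
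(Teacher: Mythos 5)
Your proposal follows essentially the same route as the paper's Appendix D: compute the per-hop means $\kappa_{ul}$ and $\kappa_{dl}$ via the zero-phase-error result of Appendix A of \cite{aman2020effective}, treat $\gamma_c^{^{\text{RIS}}}(n)$ as the minimum of two independent exponentials, and simplify $\kappa_{ul}\kappa_{dl}/(\kappa_{ul}+\kappa_{dl})$. If anything you are more careful than the paper, which silently omits the independence hypothesis needed for the min-of-exponentials step and drops the residual $\omega_0$ factor from the final expression without comment; both points you correctly flag.
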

\begin{proof}
  Given in Appendix \ref{prop3}.
\end{proof}
Due to proposition \ref{prop_SNR_c}, $P[\gamma_c^{^{\text{RIS}}}(n)>\gamma_{_{T}}] = 1-P[\gamma_c^{^{\text{RIS}}}(n)<\gamma_{_{T}}]$ becomes: $e^{-\gamma_{_{T}}/\kappa_c}$, where $\kappa_c =N\pi\bar{P}_{_{D_T}}\bar{P}_{_{\text{BS}}}\big/(\bar{P}_{_{D_T}}PL_{_{\text{BS},D_R}}+\bar{P}_{_{\text{BS}}}PL_{_{D_T,\text{BS}}})$. Then, $p_{c,3}^{o}$ and $p_{c,4}^{o}$ becomes:
\begin{subequations}\label{p_3_4_overlay}
\begin{align}
p_{c,3}^{o} = p_3^o &= \bigg(P[H_1|H_0]\pi_0 + P[H_1|H_1]\pi_1\bigg)e^{-\gamma_{_{T}}/\kappa_c}\\
p_{c,4}^{o} = p_4^o &=\bigg(P[H_1|H_0]\pi_0 + P[H_1|H_1]\pi_1\bigg)1-e^{-\gamma_{_{T}}/\kappa_c}
\end{align}
\end{subequations}

Finally, one can find the EC of the RIS-assisted D2D communication in overlay settings by substituting $p_{1}^{o}$, $p_{2}^{o}$, $p_{3}^{o}$ and $p_{4}^{o}$ in \eqref{EC_overlay_final}. Similarly, optimal transmission rate for this case can also be calculated by substituting $p_{1}^{o}$ and $p_{3}^{o}$ in \eqref{opt_rate}.

\section{Statistical QoS Analysis with CSI at the Transmitter and Receiver}
In this section, we assume that perfect CSI is available at the receiver as well as at the transmitter.\footnote{ We note that channel estimation in an RIS-assisted wireless channel is quite different from the conventional wireless channel. It is more challenging because the RIS has passive reflecting elements that cannot process the pilot signals to and from users/BS \cite{pan2020reconfigurable}. In case when RIS consists of a small number of reflecting elements, cascaded CSI is sufficient \cite{9180053}. However, channel training overhead associated with cascaded CSI when the RIS elements are large becomes prohibitively high \cite{lin2020adaptive}. One solution is to adjust the RIS phase shifts based on long-term CSI using location (of the user) and the angular information (of the arrival signal) \cite{zhi2020power}. However, acquiring long-term CSI is not a suitable solution for highly mobile users. In short, due to these challenges, we will use imperfect CSI (average and outdated CSI) in our future link-layer QoS analysis of the RIS-assisted D2D communication.} It means that $D_T$ has instantaneous values of $h_{m_z,m_y}^{^{D_T,D_R}}$ and $h_{m_z,m_y}^{^{D_T,\text{BS}}}$. Due to this assumption, $D_T$ adopts its transmission rate according to the channel conditions. Therefore, the transmission rate is no longer a constant value (as in case of without CSIT) rather a random value, equal to the instantaneous channel capacity of the respective communication link. Moreover, we assume that the average transmit power of $D_T$ is smaller than the average transmit power of the BS ($\bar{P}_{_{D_T}} < \bar{P}_{_{\text{BS}}}$). Let $r_t^d(n)$ and $r_t^c(n)$ be the transmission rates of $D_T$ in RIS-assisted D2D and RIS-assisted cellular modes, respectively. Then, this combined with the mode selection mechanism (given in Section \ref{mode_selection}) lead to a four state Markov modulated process, as shown in Table \ref{states_csit}.

\begin{table}[b]
\centering
\caption{Markov chain representation of states with CSIT}
\label{states_csit}
\begin{tabular}{|c|c|c|c|}
\hline
\textbf{States} & \textbf{Description}                                                                                                                & \textbf{Notation}                                                                           & \textbf{Action}                                                         \\ \hline
$s_1$           & \begin{tabular}[c]{@{}c@{}}RIS-assisted D2D mode is\\selected and detected as\\D2D mode. The link is ON.\end{tabular}           & \begin{tabular}[c]{@{}c@{}}$H_0|H_0$ and  $r_t^d(n) = C_d^{^{\text{RIS}}}(n)$\end{tabular} & \begin{tabular}[c]{@{}c@{}}Tx with $r_t^d(n)$.\end{tabular} \\ \hline
$s_2$           & \begin{tabular}[c]{@{}c@{}}RIS-assisted D2D mode is\\selected and detected as\\cellular mode. The link is OFF.\end{tabular}     & \begin{tabular}[c]{@{}c@{}}$H_1|H_0$ and  $r_t^c(n) > C_d^{^{\text{RIS}}}(n)$\end{tabular} & \begin{tabular}[c]{@{}c@{}}No Tx\end{tabular}             \\ \hline
$s_3$           & \begin{tabular}[c]{@{}c@{}}RIS-assisted cellular mode is\\selected and detected as D2D\\mode. The link is ON.\end{tabular}      & \begin{tabular}[c]{@{}c@{}}$H_0|H_1$ and  $r_t^d(n) < C_c^{^{\text{RIS}}}(n)$\end{tabular} & \begin{tabular}[c]{@{}c@{}}Tx with $r_t^d(n)$.\end{tabular} \\ \hline
$s_4$           & \begin{tabular}[c]{@{}c@{}}RIS-assisted cellular mode is\\selected and detected as\\cellular mode. The link is ON.\end{tabular} & \begin{tabular}[c]{@{}c@{}}$H_1|H_1$ and  $r_t^c(n) = C_c^{^{\text{RIS}}}(n)$\end{tabular} & \begin{tabular}[c]{@{}c@{}}Tx with\\$r_t^c(n)$.\end{tabular} \\ \hline
\end{tabular}
\end{table}

When $D_T$ detects RIS-assisted D2D mode through mode selection mechanism (as in states $s_1$ and $s_3$), the transmission link is considered as ON. This is because, in state $s_1$ $D_T$ transmits data with $r_t^d(n)$ which is equal to the instantaneous channel capacity of RIS-assisted D2D link ($C_d^{^{\text{RIS}}}(n)$), and in state $s_3$ $D_T$ transmits with $r_t^d(n)$ again which in this case is smaller than the instantaneous channel capacity of the RIS-assisted cellular link ($C_c^{^{\text{RIS}}}(n)$). Therefore, reliable communication is possible at $r_t^d(n)$ in both of these states. On the other hand, when $D_T$ detects RIS-assisted cellular mode (as in states $s_2$ and $s_4$), the transmission rate $r_t^c(n)$ is either equal (in state $s_4$) or greater (in state $s_2$) than the instantaneous channel capacity of the link. When $r_t^c(n) = C_c^{^{\text{RIS}}}(n)$, reliable communication is possible at $r_t^c(n)$, and the link is considered as ON. Whereas, when $r_t^c(n) > C_d^{^{\text{RIS}}}(n)$ as in state $s_2$, reliable communication is not possible at $r_t^c(n)$, and the link is considered as OFF. In short, we have three ON states ($s_1$, $s_3$, and $s_4$) and one OFF state ($s_2$) for the case when perfect CSI is available at $D_T$ prior to the transmission as shown in Fig. \ref{queue_CSIT}. 
\begin{figure}[ht]
\begin{center}
	\includegraphics[width=4in]{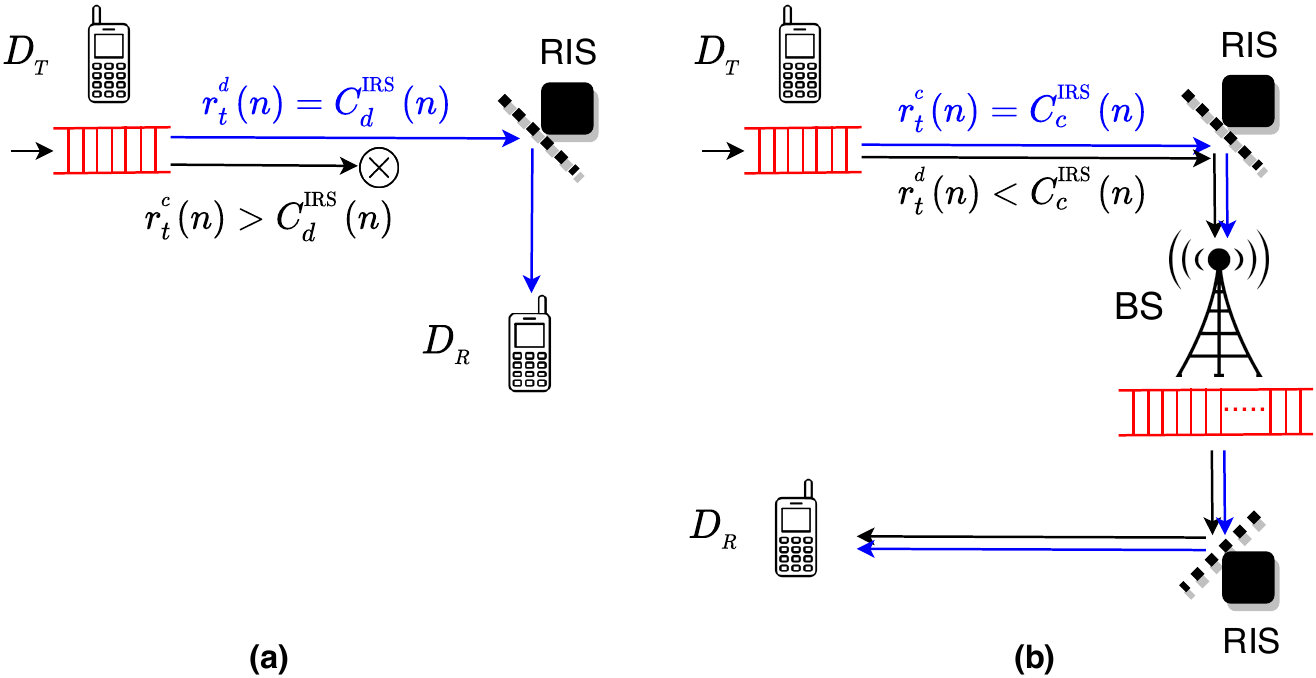}
\caption{Transmission with perfect CSI available at $D_T$ and $D_R$: (a) RIS-assisted D2D mode; (b) RIS-assisted cellular mode.}
\label{queue_CSIT}
\end{center}
\end{figure}

Next, we find the state transition probabilities for states $s_1$, $s_2$, $s_3$, and $s_4$ given in Table \ref{states_csit}. Note that, in this case, transmission rates $r_t^d(n)$ and $r_t^c(n)$ are RVs because of the presence of the perfect CSIT. Due to this fact, the state transition probabilities depend only on the prior and detection probabilities of the RIS-assisted D2D and RIS-assisted cellular modes. This is different from the previous section because when CSIT is not available, the state transition probabilities also depend on the SNR condition of the transmission link along with the detection and the prior probabilities. The state transition probabilities in this case can be written as:
\begin{equation}
\begin{split}
    p_{c,1}=p_1= \pi_0 P_{d,1} &=\pi_0 \bigg[1- Q\bigg(\frac{\log_e (\pi_0/\pi_1)\sigma_{\tau}^2 +2m_{\tau}^2}{2m_{\tau}\sigma_{\tau}} \bigg)\bigg]\\
    p_{c,2}=p_2= \pi_0 P_{e,1} &=\pi_0 \bigg[Q\bigg(\frac{\log_e (\pi_0/\pi_1)\sigma_{\tau}^2 +2m_{\tau}^2}{2m_{\tau}\sigma_{\tau}} \bigg)\bigg]\\
    p_{c,3}=p_3= \pi_1 P_{e,2} &=\pi_1 \bigg[1- Q\bigg(\frac{\log_e (\pi_0/\pi_1)\sigma_{\tau}^2 -2m_{\tau}^2}{2m_{\tau}\sigma_{\tau}} \bigg)\bigg]\\
    p_{c,4}=p_4= \pi_1 P_{d,2} &=\pi_1 \bigg[Q\bigg(\frac{\log_e (\pi_0/\pi_1)\sigma_{\tau}^2 -2m_{\tau}^2}{2m_{\tau}\sigma_{\tau}} \bigg)\bigg].
\end{split}    
\end{equation}

Now, these transition probability values can be used to find the EC for overlay and underlay settings when CSIT is available. Note that, because transition probabilities do not depend on the SNR condition of the transmission link; therefore, these probability values remain same for overlay and underlay settings. 
\subsection{EC of RIS-assisted Underlay-D2D}
To find the EC of RIS-assisted D2D communication in underlay settings when CSIT is available, we have to find $r_{t,u}^d(n)$ and $r_{t,u}^c(n)$. As we know that the transmission rate is equal to the instantaneous channel capacity of the link when CSIT is known; therefore, transmission rates become: $r_{t,u}^d(n) = C_{d,u}^{^{\text{RIS}}}(n) = B\log_2(1+\Gamma_d^{^{\text{RIS}}}(n))$ and $r_{t,u}^c(n) = C_{c,u}^{^{\text{RIS}}}(n) = B\log_2(1+\Gamma_c^{^{\text{RIS}}}(n))$. Where $\Gamma_d^{^{\text{RIS}}}(n)$ is the SINR of the RIS-assisted D2D link and is given in \eqref{sinr_d}. $\Gamma_c^{^{\text{RIS}}}(n)$ is the net SINR of the RIS-assisted cellular links, and can be calculated as $\Gamma_c^{^{\text{RIS}}}(n) = \min\{\Gamma_{ul}^{^{\text{RIS}}}(n),\Gamma_{ul}^{^{\text{RIS}}}(n)\}$. Where $\Gamma_{ul}^{^{\text{RIS}}}(n)$ and $\Gamma_{dl}^{^{\text{RIS}}}(n)$ are the SINR of the uplink and the downlink channels, respectively, and is given in \eqref{sinr_ul_dl}. Let $h_1 = h_{m_z,m_y}^{^{D_T,D_R}}$ and $h_2 = h_{m_z,m_y}^{^{D_T,\text{BS}}} + h_{m_z,m_y}^{^{\text{BS},D_R}}$ be the channel coefficients for the RIS-assisted D2D and the RIS-assisted cellular links, respectively. Then, by using \eqref{EC_overlay_final}, the EC of RIS-assisted D2D communication in underlay settings when CSIT is available becomes,
\begin{equation} \label{EC_with_CSIT_underlay}
    \begin{split}
        EC_u^{^{\text{RIS}}}(\varphi)
        =\frac{-1}{\varphi}\bigg[&\log_e\bigg(\E_{h_1}[e^{-r_{t,u}^d(n)\varphi}](p_1+p_3)+ p_2+\E_{h_2}[e^{-r_{t,u}^c(n)\varphi}p_4]\bigg)\bigg].
    \end{split}
\end{equation}
where $\E_{h_1}$ and $\E_{h_2}$ are the expectations with respect to $h_1$ and $h_2$ channel coefficients, respectively. 

\subsection{EC of RIS-assisted Overlay-D2D}
In this case, transmission rates in RIS-assisted D2D and RIS-assisted cellular modes change because the instantaneous channel capacities in these modes also change. This occurs because in overlay settings, the transmitted signal does not experience interference from cellular users. Hence, the transmission rates for this case become: $r_{t,o}^d(n) = C_{d,o}^{^{\text{RIS}}}(n) = B\log_2(1+\gamma_d^{^{\text{RIS}}}(n))$ and $r_{t,o}^c(n) = C_{c,o}^{^{\text{RIS}}}(n) = B\log_2(1+\gamma_c^{^{\text{RIS}}}(n))$, where $\gamma_d^{^{\text{RIS}}}(n)$ and $\gamma_c^{^{\text{RIS}}}(n)$ are the SNR of the RIS-assisted D2D mode and the net SNR of the RIS-assisted cellular mode, respectively. Then, similar to \eqref{EC_with_CSIT_underlay}, the EC of RIS-assisted D2D communication in overlay settings when CSIT is available becomes,
\begin{equation} \label{EC_with_CSIT_overlay}
    \begin{split}
        EC_o^{^{\text{RIS}}}(\varphi)
        =\frac{-1}{\varphi}\bigg[&\log_e\bigg(\E_{h_1}[e^{-r_{t,o}^d(n)\varphi}](p_1+p_3)+ p_2+\E_{h_2}[e^{-r_{t,o}^c(n)\varphi}p_4]\bigg)\bigg].
    \end{split}
\end{equation}

\begin{remark}{(Potential Direct-D2D Link):}

When a potential direct-D2D link is also available between $D_T$ and $D_R$ in addition to the RIS-assisted D2D link (as shown in Fig. \ref{system_model}), $D_T$ transmits data on both of the links. In such a scenario, the signal received at $D_R$ is the superposition of the signal transmitted on the direct-D2D link and the signal transmitted on the RIS-assisted D2D link. Let $h_{_{D_T,D_R}} = h_{_{D_T}}\sqrt{(d_{_{D_R}}^{^{D_T}})^{-\upsilon}}$ be the channel coefficients of the direct-D2D link, where $d_{_{D_R}}^{^{D_T}}$ is the distance between $D_T$ and $D_R$ ($d_{_{D_R}}^{^{D_T}} = \sqrt{(D_{Tx}-D_{Rx})^2+(D_{Ty}-D_{Ry})^2}$), and $h_{_{D_T}}$ is the small scale fading. Then, the received signal at $D_R$ becomes,
\begin{equation*}
    S_{_{D_R}} = \bar{P}_{_{D_T}}\big(h_{_{D_T,D_R}} + \sum_{m_z,m_y}h^{^{D_T,D_R}}_{m_z,m_y}(n)e^{j\phi_{m_z,m_y}}\big)^2\big/PL_{d}.
\end{equation*}
Further, one can find the EC for this case by recomputing the SINR and SNR expressions for underlay and overlay settings using the above-mentioned expression of the received signal. 
\end{remark}

\section{QoS Provisioning using HARQ with CSI at the Receiver Only} \label{HARQ}
In this section, we explore the retransmission schemes to enhance the QoS performance of the RIS-assisted D2D communication. When a packet is not successfully received by the receiving device, it sends a negative acknowledgment (NACK) to the transmitting device using a secure feedback link. The transmitting device then retransmits the same packet using a retransmission scheme. This phenomenon occurs more often when the transmitting device is unaware of the channel conditions (no CSI is available at the transmitter) and transmits the packet using fixed-rate and fixed transmission power. It leads to a high packet drop ratio and, consequently, a reduction in the QoS performance of the transmission link. Therefore, to enhance the QoS performance, different retransmission schemes such as automatic repeat request (ARQ) and hybrid-ARQ (HARQ) are used \cite{lin1984automatic}. These retransmission schemes provide better QoS performance at the expense of higher bandwidth utilization. Note that, when the channel conditions are bad, using retransmission schemes does not necessarily enhance QoS performance; however, it would surely lead to higher bandwidth utilization (due to many retransmissions). Moreover, when the transmitter knows the CSI before the transmission, the probability of packet drop is small. Therefore, using HARQ in this scenario is not as beneficial as when CSI at the transmitter is unknown. In short, we use the HARQ retransmission scheme (due to its benefits over ARQ in-terms of performance and bandwidth utilization) to enhance the QoS of the RIS-assisted D2D communication only when CSI is not available at the transmitting device. Like the previous sections, we use the EC to perform the throughput analysis of the HARQ-enabled RIS-assisted D2D communication under QoS constraints.

In HARQ-assisted D2D communication, the transmitting device transmits a packet on either the RIS-assisted D2D link or the RIS-assisted cellular link. If the receiver successfully decodes the received packet, it sends an acknowledgment, and the transmitting device transmits the next packet. On the other hand, if the receiver is unable to decode the received packet, it sends a negative acknowledgment and stores the erroneous packet in a buffer. The transmitting device then retransmits the same packet. The receiver uses the received packet from the current transmission attempt, as well as the packet stored in the buffer from the previous transmission attempt, to decode the packet. This process continues until either the receiver successfully decodes the received packet or an upper limit for retransmission attempts is reached. If the retransmission limit is reached, and the received packet is still not successfully decoded by the receiver, an outage occurs. Let $T$ be a transmission period containing $X$ copies of the data packet, and let $X$ be the upper limit of the retransmission attempts. Then, the decoding error probability of the $x^{\text{th}}$ attempt for RIS-assisted D2D and RIS-assisted cellular modes in underlay settings is \cite{li2017throughput},
\begin{equation}\label{decoding_error_prob}
\begin{split}
    P^{d}_{x}(h_1) &= Q \bigg( \frac{\sum_{n=1}^{x}\log_2(1+\Gamma_d^{^{\text{RIS}}}(n))+{\log(xl)/l} - r_t}{\log_2e\sqrt{\sum_{n=1}^{x}\frac{(2+\Gamma_d^{^{\text{RIS}}}(n))\Gamma_d^{^{\text{RIS}}}(n)}{l(\Gamma_d^{^{\text{RIS}}}(n) + 1)^2}}}\bigg)\\
    P^{c}_{x}(h_2) &= Q \bigg( \frac{\sum_{n=1}^{x}\log_2(1+\Gamma_c^{^{\text{RIS}}}(n))+{\log(xl)/l} - r_t}{\log_2e\sqrt{\sum_{n=1}^{x}\frac{(2+\Gamma_c^{^{\text{RIS}}}(n))\Gamma_c^{^{\text{RIS}}}(n)}{l(\Gamma_c^{^{\text{RIS}}}(n) + 1)^2}}}\bigg)
\end{split}
\end{equation}
where $x\in\{1,2,\dots,X\}$ and $l$ is the length of each fading block. We also assume that only one packet is being transmitted during transmission period $T$. Let $P_{t,m}$ be the probability of $m$ number of packets removed from the queue of $D_T$ in time $t$. Note that $m = 1$ due to the fact that a packet leaves the queue of $D_T$ at the end of $t$ either because of packet drop by the queue (in the case of decoding failure after $X$ unsuccessful attempts) or because of the successful decoding at $D_R$. Then, $P_{t,m}$ for RIS-assisted D2D and RIS-assisted cellular modes becomes,
\begin{equation}\label{eq:P_t_m}
  P_{t,m} =
  \begin{cases}
    \left.\begin{aligned}
        \E_{_{h_1}}[P^d_{t-1}]-\E_{_{h_1}}[P^d_{t}], \quad &t<X\\
        \E_{_{h_1}}[P^d_{_{X-1}}],\quad &t=X
       \end{aligned}
 \;\right\}
  \quad \text{\begin{tabular}[c]{@{}l@{}}RIS-assisted D2D mode\end{tabular}} \\
    \left.\begin{aligned}
        \E_{_{h_2}}[P^c_{t-1}]-\E_{_{h_2}}[P^c_{t}], \quad &t<X\\
        \E_{_{h_2}}[P^c_{_{X-1}}], \quad &t=X
       \end{aligned}
 \;\right\}
  \quad \text{\begin{tabular}[c]{@{}l@{}}RIS-assisted cellular mode\end{tabular}}\\
  \end{cases}
\end{equation}
Note that $t<X$ only corresponds to successful transmission. Note also that $t=X$ corresponds to either successful decoding at $X^{\text{th}}$ attempt or the packet getting dropped from the queue of $D_T$ as a result of an outage.

Now, to find the throughput under statistical QoS guarantees, we use the EC, which in this case becomes (Theorem 1 \cite{larsson2016effective}),
\begin{equation}\label{EC_HARQ}
    EC^{^{\text{RIS}}}_{H}(\varphi) = \frac{-1}{\varphi}\log_e \text{sp}(\mathbf{A})
\end{equation}
where $\mathbf{A}$ is the block companion matrix of size $X\times X$ and $\text{sp}(\mathbf{A})$ is the spectral radius (largest absolute eigenvalue) of matrix $\mathbf{A}$. Based on the number of retransmission attempts, matrix $\mathbf{A}$ can be defined as:
\begin{equation}\label{A}
\mathbf{A}=
\begin{bmatrix}
  a_1 & a_2 & \dots &a_{_{X-1}}& a_{_{X}} \\
 1 & 0 & \dots & 0 & 0 \\
  0 & 1 & \dots & 0 & 0 \\
  \vdots & \vdots & \ddots & \vdots & \vdots\\
  0 & 0 & \dots & 1 & 0
\end{bmatrix}.
\end{equation}
To find the entries of the matrix $\mathbf{A}$ ($a_1,a_2,\dots,a_{_{X-1}},a_{_{X}}$), we combine results of the Markov chain modelling of the D2D link and the probability of $m$ number of packets removed from the queue of $D_T$ in time $t$. Then, $a_{x}$ where $x\in\{1,2,\dots,X\}$ becomes,
\begin{equation}
	\label{eq:a_x}
a_x=
	 \begin{cases}   \mathbf{q}_{_{1}} \mathbf{\Phi}(-\varphi)\mathbf{p}_{u,i}^{\intercal}, &x=1\\
                     \mathbf{q}_{_{2}} \mathbf{\Phi}(-\varphi)\mathbf{p}_{u,i}^{\intercal}, &2\leq x \leq X-1 \\
                     \mathbf{q}_{_{3}} \mathbf{\Phi}(-\varphi)\mathbf{p}_{u,i}^{\intercal}, &x=X
                     \end{cases}
\end{equation}
where $\mathbf{p}_{u,i} = [p^u_1,p^u_2,p^u_3,p^u_4]$ is the vector containing state transition probabilities in underlay settings and where $\mathbf{\Phi}(-\varphi)$ is the diagonal matrix containing the LMGFs of four states given in Table \ref{states}. $\mathbf{q}_{_{1}}$, $\mathbf{q}_{_{2}}$, and $\mathbf{q}_{_{3}}$ are the vectors containing probabilities of $m$ number of packets removed from $D_T$'s queue in time $t$ for four Markov states. They can be written as the following: $\mathbf{q}_{_{1}} = [1-\E_{{h_1}}[P_1^d],1,1-\E_{{h_2}}[P_1^c],1]$, $\mathbf{q}_{_{2}} = [\E_{{h_1}}[P_{x-1}^d]-\E_{{h_1}}[P_{x}^d],1,\E_{{h_2}}[P_{x-1}^c]-\E_{{h_2}}[P_{x}^c],1]$, and $\mathbf{q}_{_{3}} = [\E_{{h_1}}[P_{_{X-1}}^d],1,\E_{{h_2}}[P_{_{X-1}}^c],1]$. By substituting these values in \eqref{eq:a_x}, we can find the entries of matrix $\mathbf{A}$. One can then find the spectral radius (largest absolute eigenvalue) of $\mathbf{A}$ by setting an upper limit of packet retransmissions ($X$). By substituting the largest absolute eigenvalue in \eqref{EC_HARQ}, one can also find the EC of HARQ-enabled RIS-assisted D2D communication in underlay settings.  

To find the EC of HARQ-enabled RIS-assisted D2D communication in overlay settings, one has to recompute the decoding error probabilities given in \eqref{decoding_error_prob} by replacing $\Gamma_d^{^{\text{RIS}}}(n)$ and $\Gamma_c^{^{\text{RIS}}}(n)$ with $\gamma_d^{^{\text{RIS}}}(n)$ and $\gamma_d^{^{\text{RIS}}}(n)$, respectively. These probability values, along with state-transition probabilities in overlay settings ($\mathbf{p}_{o,i} = [p^o_1,p^o_2,p^o_3,p^o_4]$), can then be used to compute the entries of the block-companion matrix $\mathbf{A}$. Similar to the computation of the EC in underlay settings, the EC in overlay settings can be calculated by substituting the largest absolute eigenvalue of $\mathbf{A}$ in \eqref{EC_HARQ}.

\section{Numerical Results}
The scope of this work is to propose an RIS-assisted D2D communication system for improving the D2D user's throughput while meeting the QoS requirements. In this section, we numerically illustrate the EC of an RIS-assisted D2D channel under statistical QoS constraints at the transmit device's queue.
\subsection{Simulation Setup}
We consider an RIS-assisted D2D network with two pairs of single-antenna user equipment that are positioned in a $300 m \times 300 m$ rectangular using a uniform distribution. One pair refers to the D2D pair ($D_T$ and $D_R$) and the other as the cellular user pair ($U_T$ and $U_R$). We assume that the LoS and the NLoS links of the channels $D_T \to$ RIS $\to D_R$, $D_R \to$ RIS $\to$ BS, and BS $\to$ RIS $\to D_R$ are indicated by the Rician fading channel. We set the Rician factor $\alpha = 4$ \cite{el2014modelling}. We set the number of RIS elements $N$ between $10$ to $100$.
\subsection{Simulation Results}
\begin{figure}[ht]
\begin{center}
	\includegraphics[width=4in]{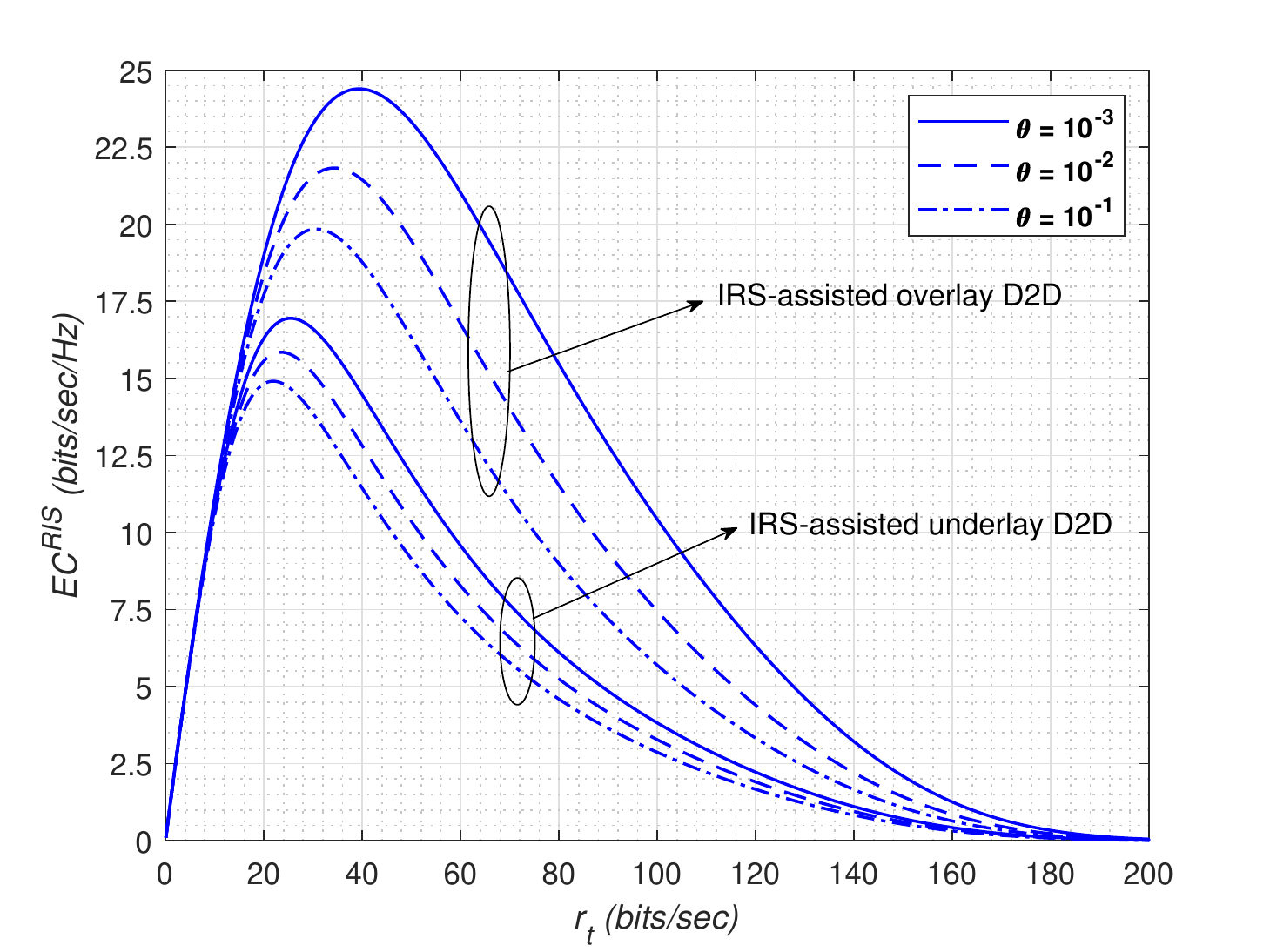}
\caption{EC vs transmission rate ($r_t$): A GD approach to find the optimal transmission rate to maximize the EC of RIS-assisted D2D communication without CSIT.}
\label{EC_vs_r}
\end{center}
\end{figure}
Fig. \ref{EC_vs_r} presents an exhaustive search through the GD algorithm to find the optimal transmission rates ($r_t$) in RIS-assisted overlay and underlay D2D communication when CSI is available at the receiver only. We observe that the EC of RIS-assisted D2D communication is a quasi-concave function of the transmission rate. For lower transmission rates, EC increases, and for higher values, it decreases. This numerical investigation reveals that when the transmission rate is below a certain threshold, EC increases with an increase in the transmission rate. It is because when the transmission rate is low, the low departure rate becomes the bottleneck for the EC. On the other hand, when the transmission rate is high, the outage probability (due to high packet drop ratio) becomes the bottleneck. It allows us to find the optimal transmission rate on which the D2D transmission link achieves maximum EC in the respective communication modes. We also observe that the optimal transmission rate changes with a change in the QoS constraints imposed at the transmit node's queue. A higher optimal transmission rate is achieved with a small QoS exponent and vice versa. It is because higher QoS constraints lead to higher outage probability and, consequently, low transmission rates. Additionally, one can also observe the influence of imposing stringent QoS constraints on the EC; for instance, a higher EC is obtained at the optimal transmission rate when loose QoS constraints are imposed at the transmit node's queue. Moreover, this simulation result studies the impact of D2D channel allocation on the EC. It can be seen that when the transmitter uses a dedicated D2D channel (overlay D2D mode), a better EC can be achieved when compared to the EC when the transmitter reuses the cellular user channel for transmission (underlay D2D mode). It is because the transmission in underlay mode experiences interference from the cellular user present in the network, and that causes the achievable EC to drop down. This can be verified from the analytical analysis presented in Section IV-A and IV-B.

\begin{figure}[ht]
\begin{center}
	\includegraphics[width=4in]{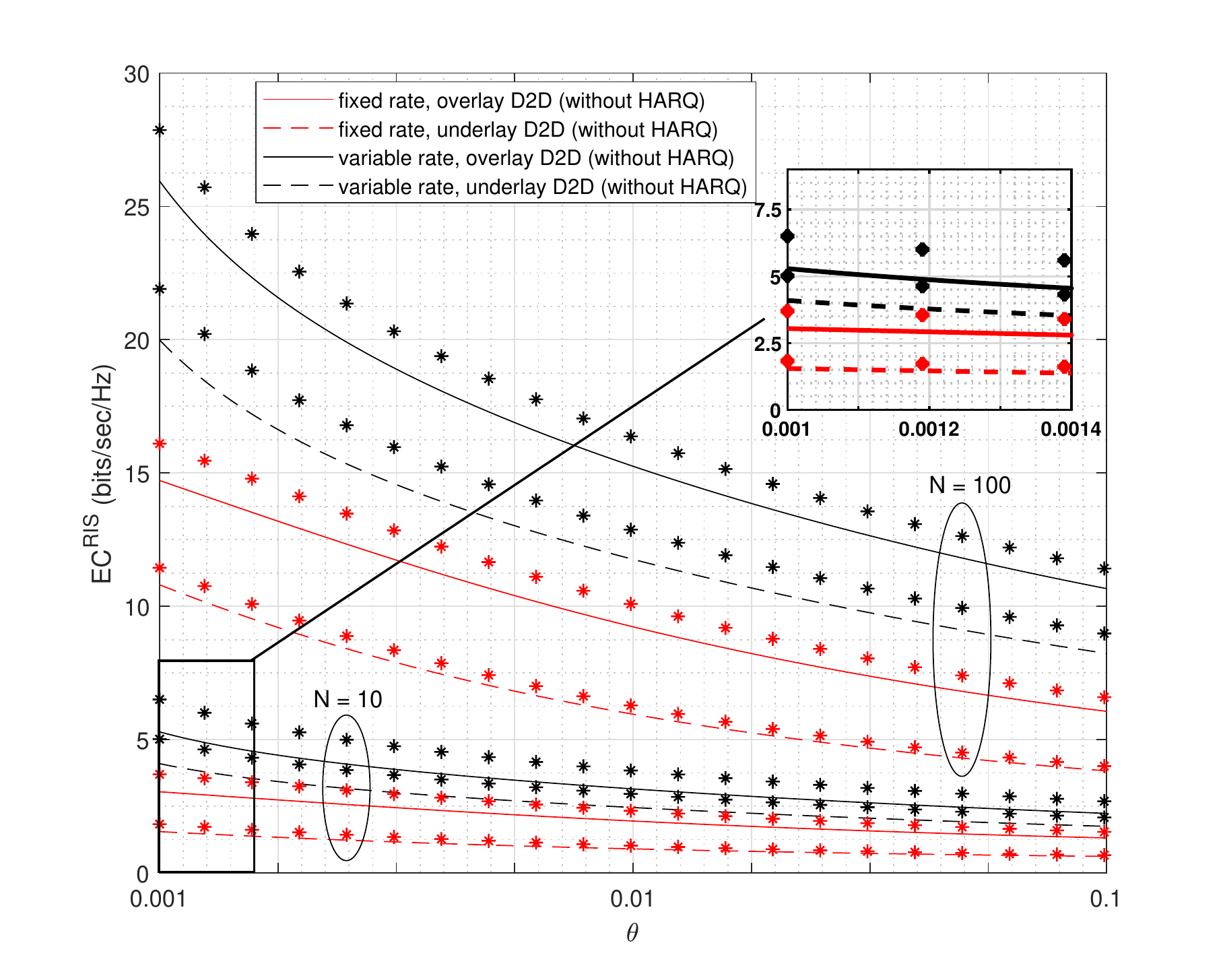}
\caption{EC vs QoS exponent ($\theta$): A comparison for EC of RIS-assisted D2D communication with and without CSIT and the impact of different number of RIS elements on the respective EC.}
\label{EC_vs_theta}
\end{center}
\end{figure}
In Fig. \ref{EC_vs_theta}, the EC of Rician block-fading RIS-assisted D2D channel for a fixed and variable transmission rate is plotted versus the QoS exponent ($\theta$) for different RIS elements. This simulation investigates the impact of having stringent QoS constraints imposed at the transmission queue on the EC of the RIS-assisted D2D communication link. The EC of RIS-assisted D2D communication decreases with an increase in $\theta$. We observe that the EC of RIS-assisted D2D link when CSIT is available is higher than the EC with no CSIT. It is because when the transmitter is aware of CSI prior to the transmission, it adjusts its transmission rate according to the channel conditions; hence, a better EC can be achieved. On the other hand, the transmitter sends data using a fixed transmission rate when CSIT is unknown. However, this gain of transmission with CSIT decreases as stricter QoS constraints are imposed at the transmission queue. We also observe that higher EC can be achieved when the RIS-assisted D2D link operates in overlay mode when compared to underlay mode. Section IV and V provides the complete analytical analysis for calculating the EC of the RIS-assisted D2D communication in underlay and overlay communication modes with and without instantaneous CSIT, respectively. This simulation result also studies the impact of increasing the number of RIS elements ($N$) on the achievable EC in the respective communication modes. We note that as $N$ increases, the EC of RIS-assisted D2D link also increases. We observe at least five times better EC is achieved with $N = 100$ as compared to EC with $N = 10$. It shows the efficacy of using a large RIS. However, increasing the number of RIS elements also increases the hardware cost. 

\begin{figure}[ht]
\begin{center}
	\includegraphics[width=4in]{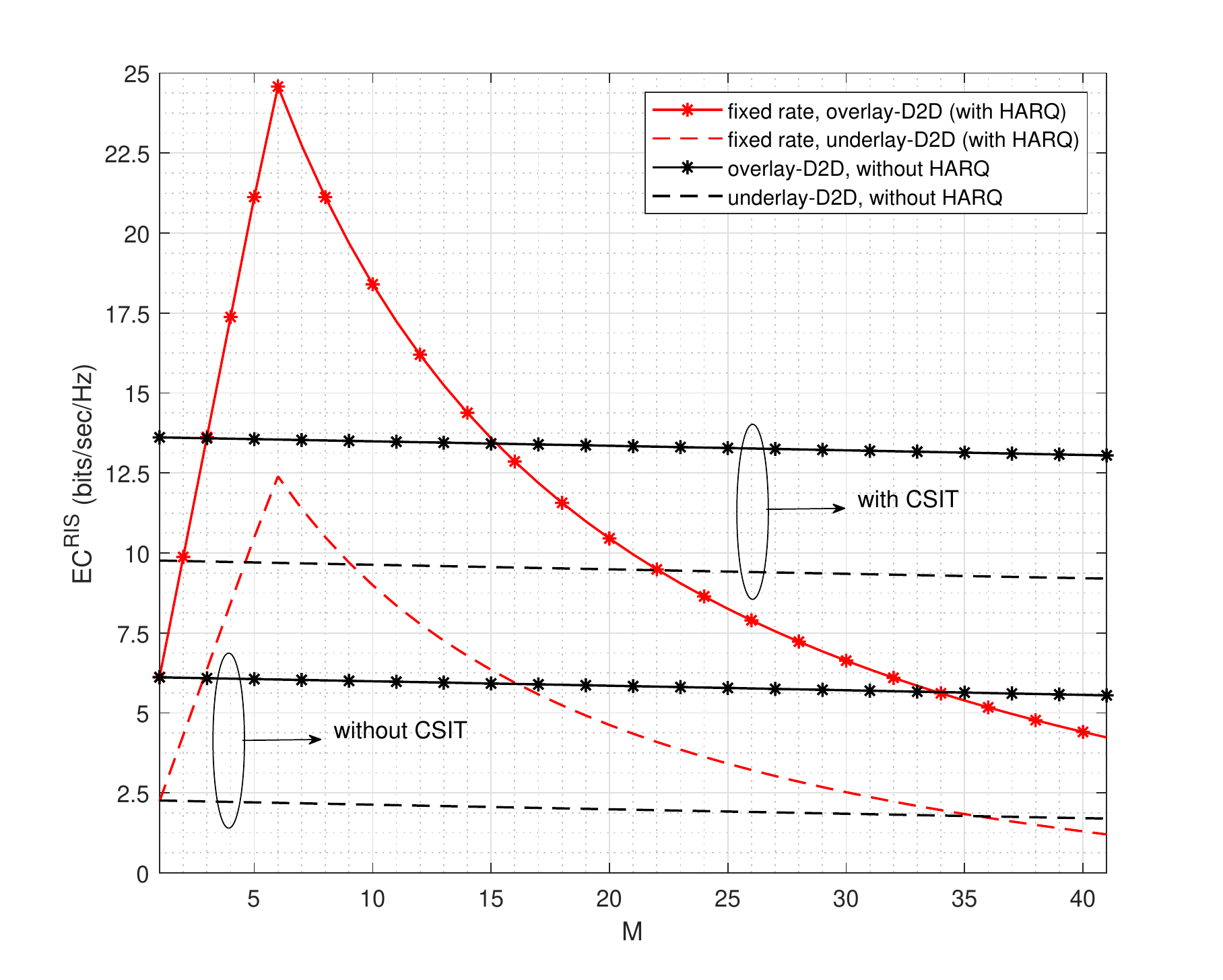}
\caption{Impact of HARQ on the EC of RIS-assisted D2D communication: EC vs retransmission deadline constraint ($M$).}
\label{EC_vs_M}
\end{center}
\end{figure}
Fig. \ref{EC_vs_M} presents the numerical investigation for the analytical findings of Section VI, which provides a comprehensive framework for enhancing the QoS performance of the RIS-assisted D2D communication link when instantaneous CSIT is unknown. In this case, we leverage the HARQ retransmission scheme to achieve better EC of the candidate D2D link in both underlay and overlay communication modes; hence, an improved QoS performance. Fig. \ref{EC_vs_M} provides an analysis to understand the impact of having multiple retransmissions of a packet on the achievable EC of the RIS-assisted D2D communication link. We plot the EC under fixed and variable transmission rate as a function of the retransmission limit $M$. We observe that an optimal value of $M$ exists that maximizes the EC of HARQ-enabled RIS-assisted D2D link. The EC for this case increases with an early increase in $M$, and after a unique value of $M$ is reached (i-e. $M=6$), it starts decreasing. It is because, for smaller $M$, the transmit node has to reduce the transmission rate to meet the target outage probability, which results in reduced EC. Similarly, the transmission rate increases with an increase in $M$, which consequently improves the EC of the RIS-assisted D2D link. After the EC reaches its maximum value under a specific QoS constraint, a further increase in $M$ reduces the EC. Note that the HARQ retransmission scheme is used when no CSIT is available and the transmit node uses a fixed transmission rate. Therefore, for $M=1$, the EC of HARQ-enabled D2D is equal to the EC when no retransmission scheme is used, and the transmit node uses a fixed transmission rate. This numerical investigation reveals that four times better EC can be achieved using HARQ with optimal $M$ as compared to EC of the same link without HARQ.

\begin{figure}[ht]
\begin{center}
	\includegraphics[width=4in]{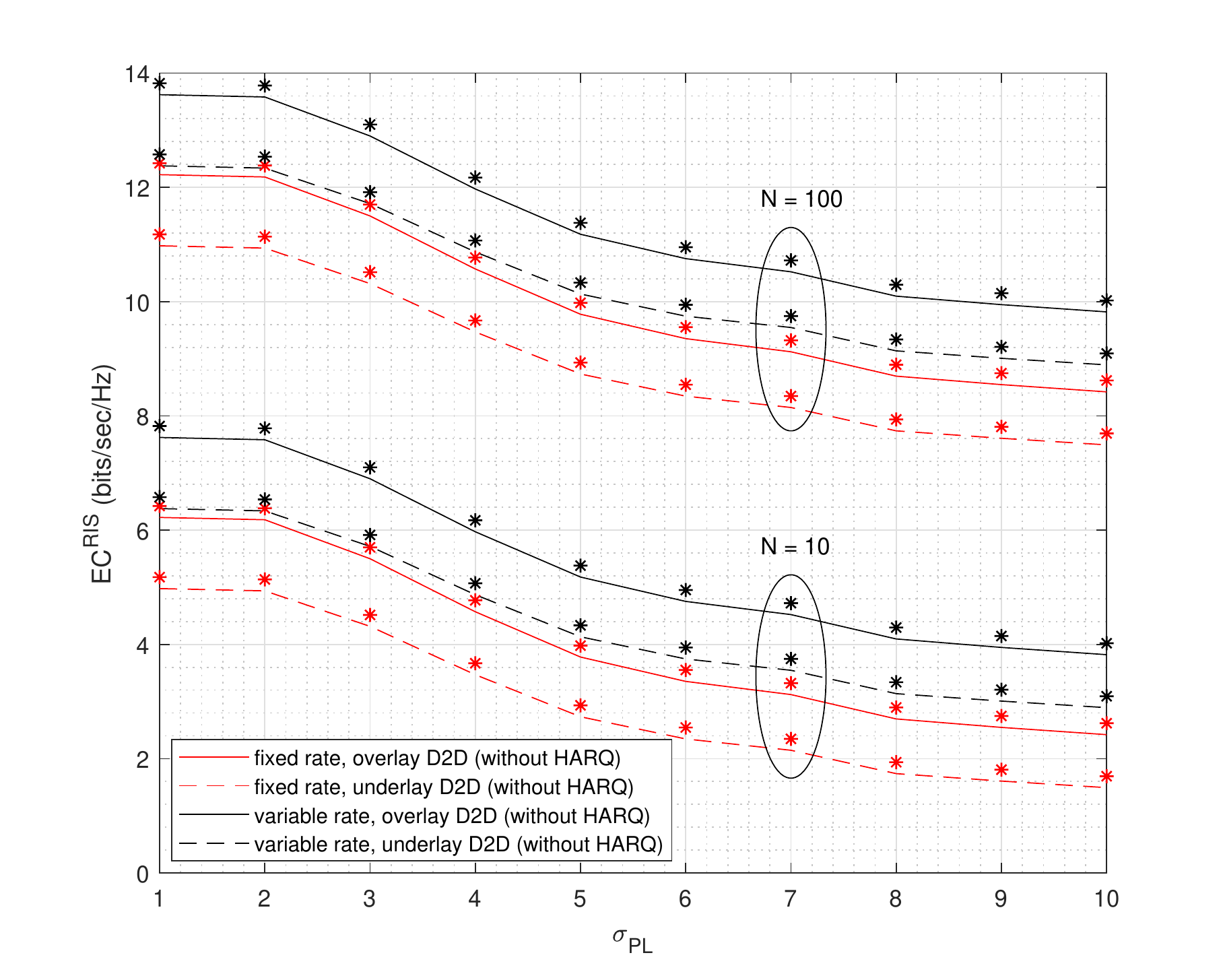}
\caption{Impact of mode selection on the EC of RIS-assisted D2D communication: EC vs variance of the estimation error of the PL measurements ($\sigma_{_{PL}}$) for fixed and variable transmission rates.}
\label{EC_vs_sigma}
\end{center}
\end{figure}
Next, we investigate the impact of the mode selection mechanism (presented in Section III) on the EC of RIS-assisted D2D link in Fig. \ref{EC_vs_sigma}. The EC decreases with an increase in the variance of the estimation error of path loss measurements ($\sigma_{_{PL}}^2$). It is because the quality of the estimation techniques used for estimating the path loss measurements for the mode selection has a direct impact on the system's performance. As the quality decreases, the variance of the estimation error increases, which consequently reduces the EC. It shows the paramount importance of designing efficient estimators for the path loss measurements for mode selection. Moreover, one can also see the impact of using different sets of RIS elements, overlay and underlay D2D, and fixed and variable transmission rates on the EC versus $\sigma_{_{PL}}^2$. 

\begin{figure}[ht]
\begin{center}
	\includegraphics[width=4in]{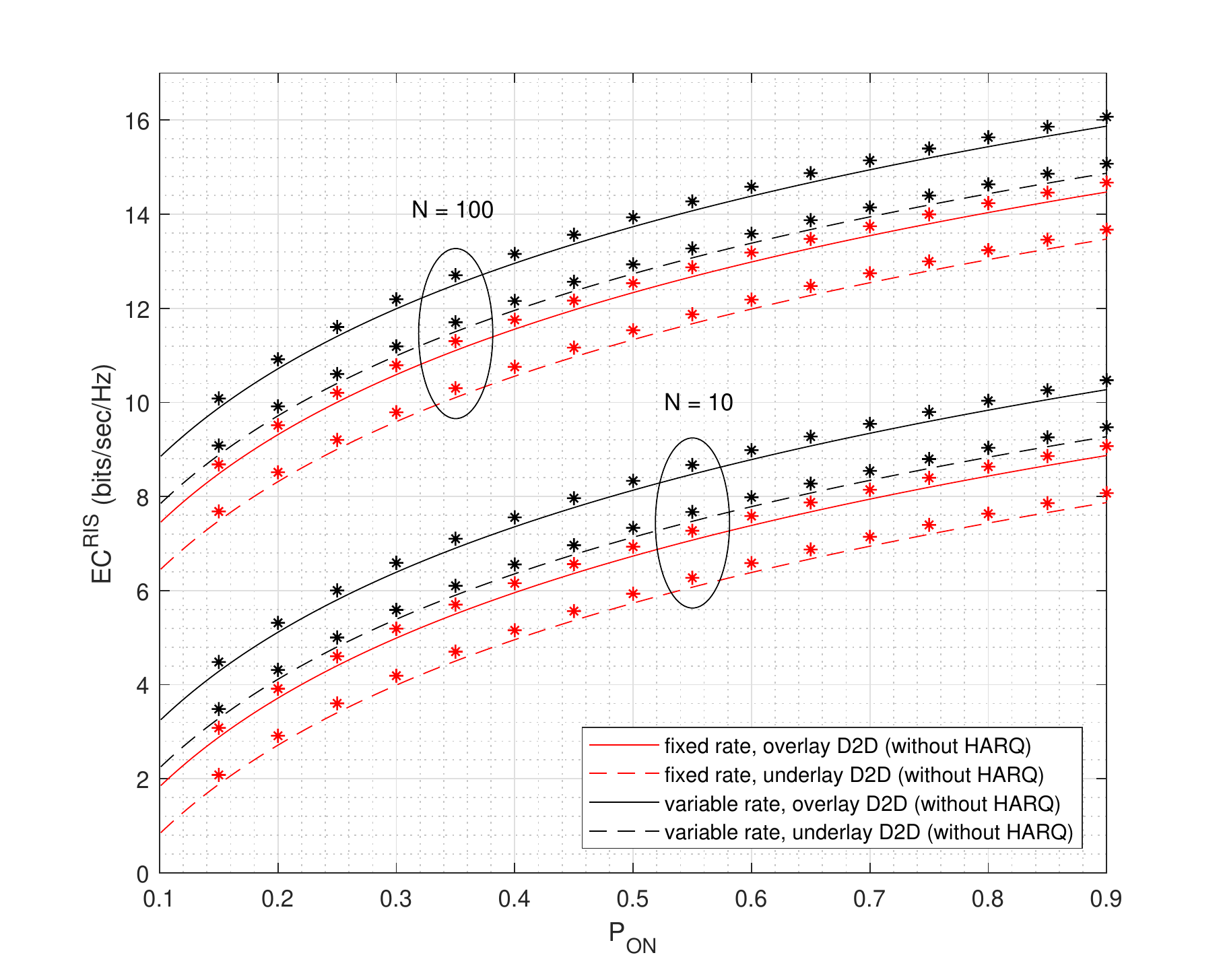}
\caption{EC of RIS-assisted D2D communication vs the probability of ON states ($P_{ON}$) for fixed and variable transmission rates.}
\label{EC_vs_PON}
\end{center}
\end{figure}
Fig. \ref{EC_vs_PON} presents the impact of Markov chain modeling of the D2D link on the EC of RIS-assisted D2D communication. The EC, with and without the knowledge of CSIT, is plotted versus the ON states probability. $P_{_{ON}}$ is the accumulative probability of ON states. For fixed-rate transmission, $P_{_{ON}} = p_1^u+p_3^u$ and $P_{_{ON}} = p_1^o+p_3^o$ for underlay and overlay modes,  respectively. For variable rate transmission, $P_{_{ON}} = p_1+p_3+p_4$ for both overlay and underlay modes. We observe that the EC increases with an increase in $P_{_{ON}}$. It is because, when the Markov D2D channel is in ON state, it transmits with $r_t$ and $r_t(n)$ in fixed and variable rate transmission, respectively, which leads to higher EC. On the other hand, when the Markov D2D channel is in OFF state, it does not transmit at all; thus, lowering the EC. In short, if the probability of the D2D channel is high for being in ON state, it increases the EC due to continuous transmission. 

\section{Conclusion}
In this paper, we have studied the throughput of the RIS-assisted D2D communication subject to delay QoS constraints. We modeled the RIS-assisted D2D link as a Markov service process with transition probabilities and formulated the closed-form expression for the effective capacity (EC) using these probabilities. We have extended our analysis to both cases when the channel state information at the transmitter (CSIT) is available and not. We observed that the packet drop ratio increases when the transmit D2D node transmits without prior knowledge of CSIT. Therefore, we used the hybrid automatic repeat request (HARQ) retransmission scheme to enhance the throughput of the RIS-assisted D2D link. Through simulation results, we have observed five times better EC can be achieved using a higher number of RIS elements ($N$). However, increasing the number of RIS elements also increases the hardware cost. Thus, the size of the RIS should be subjected to the link's throughput requirements. We also observed that there exists an optimal transmission rate ($r_t$) and an optimal retransmission deadline constraint ($M$) that maximizes the EC of the RIS-assisted D2D link. Last but not least, we also observed that the EC decreases with an increase in the variance of the path loss measurement error. 
\appendices
\section{Phase Shift Designs for RIS Elements}\label{prop_phase_shift}
Channel of the RIS-assisted D2D link ($D_T \to$ RIS $\to D_R$) can be written as $h^{^{D_T}}_{_{\textnormal{RIS}}}\mathbf{\Theta}h_{_{D_R}}^{^{\textnormal{RIS}}}$, where $h^{^{D_T}}_{_{\textnormal{RIS}}}$ and $h_{_{D_R}}^{^{\textnormal{RIS}}}$ are the channel coefficients of $D_T \to \textnormal{RIS}$ and $\textnormal{RIS} \to D_R$ link, respectively, and $\mathbf{\Theta}$ is the phase shift matrix. Calculation of the phase shift matrix critically depends on the availability of the CSI at the transmitter before the transmission happens. As mentioned above, we consider two cases in this work, first when $D_T$ sends data without knowing the instantaneous CSI (average CSI instead), and second when the transmitter has the perfect instantaneous CSI prior to the transmission. Therefore, we present the framework for calculating the optimal phase shift matrix for both the cases. 
\begin{itemize}
    \item \textbf{Phase Shift Design based on Statistical CSI (Path Loss):}
    The optimal phase shift matrix when the transmitter sends data knowing only statistical CSI can be written as,
    \begin{equation}
        \mathbf{\Theta}^{^{\text{stat}}}_{_{\text{opt}}} = \text{diag} \bigg( \big[ e^{j\phi_{1,\text{opt}}^{\text{stat}}}, e^{j\phi_{2,\text{opt}}^{\text{stat}}}, \dots , e^{j\phi_{N,\text{opt}}^{\text{stat}}}  \big]^T \bigg).
    \end{equation}
    where $e^{j\phi_{1,\text{opt}}^{\text{stat}}}, e^{j\phi_{2,\text{opt}}^{\text{stat}}}, \dots , e^{j\phi_{N,\text{opt}}^{\text{stat}}}$ for $n\in (1,2,\dots,N)$ are the optimal phase shift values for each RIS element when only the statistical CSI is known at the transmitter. Phase shift value of each RIS element can be found using the following expression. 
    \begin{equation}
        \phi_{n,\text{opt}}^{\text{stat}} = -\text{arg} \big(\bar{h}^{^{D_T}}_{_{\text{RIS}}}(n) \big)-\text{arg} \big(\bar{h}_{_{D_R}}^{^{\text{RIS}}}(n)  \big).
    \end{equation}
    where $\bar{h}^{^{D_T}}_{_{\text{RIS}}}(n)$ and $\bar{h}_{_{D_R}}^{^{\text{RIS}}}(n)$ are the average values of the channel coefficients for $n^{th}$ RIS element. It is important to note here that the statistical CSI (path loss) varies slowly in the wireless channel, and once estimated, can last for multiple time intervals. Therefore, $ \mathbf{\Theta}^{^{\text{stat}}}_{_{\text{opt}}}$ is useful for a longer period of time. 
    \item \textbf{Phase Shift Design based on Instantaneous Perfect CSI:}
    The optimal phase shift matrix when the transmitter knows the instantaneous perfect CSI prior to sending data can calculated using the following expression. 
    \begin{equation}
        \mathbf{\Theta}^{^{\text{inst}}}_{_{\text{opt}}} = \text{diag} \bigg( \big[ e^{j\phi_{1,\text{opt}}^{\text{inst}}}, e^{j\phi_{2,\text{opt}}^{\text{inst}}}, \dots , e^{j\phi_{N,\text{opt}}^{\text{inst}}}  \big]^T \bigg),
    \end{equation}
    with each instance becomes,
      \begin{equation}
        \phi_{n,\text{opt}}^{\text{inst}} = -\text{arg} \big(h^{^{D_T}}_{_{\text{RIS}}}(n) \big)-\text{arg} \big(h_{_{D_R}}^{^{\text{RIS}}}(n)  \big).
    \end{equation}
In contrast to statistical CSI, instantaneous CSI changes rapidly for a wireless channel due to the small-scale fading. Even if the wireless channel is considered stationary, the small-scale fading needs to be estimated multiple times in a time interval. Therefore, the optimal phase shifts of the RIS elements should be updated frequently to adapt to the channel conditions that characterizes the small-scale fading \cite{jung2020performance}. 
\end{itemize}

\section{Proof of Proposition 4.1}\label{prop1}
The outage probability of RIS-assisted D2D link ($D_T \to$ RIS $\to D_R$) is $P[\Psi_d^{^{\text{RIS}}}(n)<\gamma_{_{T}}]$, where $\Psi_d^{^{\text{RIS}}}(n)$ is the SIR of the link. Let
\begin{equation*}
\psi_d^{^{\text{RIS}}}=\frac{\bar{P}_{_{D_T}}\big(\sum_{m_z,m_y}h^{^{D_T,D_R}}_{m_z,m_y}(n)e^{j\phi_{m_z,m_y}}\big)^2}{PL_{d}}
\end{equation*}
and
\begin{equation*}
I_d=\frac{\bar{P}_{_{U_T}}\big(h_{_{U_T,D_R}})^2}{PL_{_{U_T,D_R}}}.
\end{equation*}
Then, $\Psi_d^{^{\text{RIS}}} = \psi_d^{^{\text{RIS}}}\big/I_d$. Note that $\psi_d^{^{\text{RIS}}}$ and $I_d$ are exponentially distributed random variables (RVs); therefore, $\psi_d^{^{\text{RIS}}} \sim \exp(\alpha_1)$ and $I_d \sim \exp(\alpha_2)$, where $\alpha_1 = \E[\psi_d^{^{\text{RIS}}}]$ and $\alpha_2 = \E[I_d]$ are the mean of the respective distributions. In finding the mean of the distributions, we assume that there is no phase error experienced by the reflected signals (i-e. $\phi_{m_z,m_y} = 0$). Thus, $\alpha_1 = \frac{PL_d}{N\pi \bar{P}_{_{D_T}}}$ (Appendix A \cite{aman2020effective}) and $\alpha_2 = \frac{PL_{_{U_T,D_R}}}{ \bar{P}_{_{U_T}}}$. The cumulative distribution function (CDF) of $\Psi_d^{^{\text{RIS}}}$ then becomes: $P[\Psi_d^{^{\text{RIS}}}(n)<\gamma_{_{T}}] = \frac{\alpha_1}{(\alpha_1 + \alpha_2)/\gamma_{_{T}}}$. Now, by substituting in the values of $\alpha_1$ and $\alpha_2$, and after some steps of simplifications, the final result becomes:
\begin{equation}
  P[\Psi_d^{^{\text{RIS}}}(n)<\gamma_{_{T}}] = \frac{\bar{P}_{_{U_T}}PL_{d}\gamma_{_{T}}}{PL_{d}\bar{P}_{_{U_T}} + PL_{_{U_T,D_R}}\bar{P}_{_{D_T}}N\pi}.
\end{equation}
A similar outage probability analysis is done in \cite{lee2010outage}, in which the authors first propose a relay selection mechanism and then use it to derive the outage probability of a cognitive relay network. Their results investigate the impact of the distance ratio of the interference link to the relaying link on the outage performance of cognitive relay networks.
\section{Proof of Proposition 4.2}\label{prop2}
The outage probability of the RIS-assisted cellular link is $P[\Psi_c^{^{\text{RIS}}}(n)<\gamma_{_{T}}]$, where $\Psi_c^{^{\text{RIS}}}(n)$ is the net SIR of the uplink and the downlink channels. This net SIR can be calculated as: $\Psi_c^{^{\text{RIS}}}(n) = \min\{\Psi_{ul}^{^{\text{RIS}}}(n), \Psi_{dl}^{^{\text{RIS}}}(n) \}$, where $\Psi_{ul}^{^{\text{RIS}}}(n)$ and $\Psi_{dl}^{^{\text{RIS}}}(n)$ are the SIR of the uplink ($D_T \to$ RIS $\to$ BS) and the downlink (BS $\to$ RIS $\to D_R$) channels, respectively. Let,
\begin{equation*}
\begin{split}
\psi_{ul}^{^{\text{RIS}}} &=\frac{\bar{P}_{_{D_T}}\big(\sum_{m_z,m_y}h^{^{D_T,\text{BS}}}_{m_z,m_y}(n)e^{j\phi_{m_z,m_y}}\big)^2}{PL_{_{D_T,\text{BS}}}}\\
\psi_{dl}^{^{\text{RIS}}} &=\frac{\bar{P}_{_{\text{BS}}}\big(\sum_{m_z,m_y}h^{^{\text{BS},D_R}}_{m_z,m_y}(n)e^{j\phi_{m_z,m_y}}\big)^2}{PL_{_{\text{BS},D_R}}}\\
I_{ul} &=\frac{\bar{P}_{_{U_T}}\big(h_{_{U_T,\text{BS}}}\big)^2}{PL_{_{U_T,\text{BS}}}}.
\end{split}
\end{equation*}
Then $\Psi_{ul}^{^{\text{RIS}}} = \psi_{ul}^{^{\text{RIS}}}\big/ I_{ul}$ and $\Psi_{dl}^{^{\text{RIS}}}(n) = \psi_{dl}^{^{\text{RIS}}}\big/ I_{d}$, where $I_{d}$ is similar to the one in Appendix \ref{prop1}. Similar to $I_{d}$, we observe that $\psi_{ul}^{^{\text{RIS}}}$, $\psi_{dl}^{^{\text{RIS}}}$, and $I_{ul}$ are also exponentially distributed RVs. Therefore, $\psi_{ul}^{^{\text{RIS}}} \sim \exp(\beta_1)$, $\psi_{dl}^{^{\text{RIS}}}\sim \exp(\beta_2)$, and $I_{ul}\sim \exp(\beta_3)$, where $\beta_1 = \E[\psi_{ul}^{^{\text{RIS}}}]$, $\beta_2 = \E[\psi_{dl}^{^{\text{RIS}}}]$, and $\beta_3 = \E[I_{ul}]$ are the means of the respective distributions. Similar to Appendix \ref{prop1}, we assume that there is no phase error experienced by the reflected signals; then, $\beta_1 = \frac{PL_{_{D_T,\text{BS}}}}{N\pi \bar{P}_{_{D_T}}}$, $\beta_2 = \frac{PL_{_{\text{BS},D_R}}}{N\pi \bar{P}_{_{\text{BS}}}}$, and $\beta_3 = \frac{PL_{_{U_T,\text{BS}}}}{ \bar{P}_{_{U_T}}}$. Because $\Psi_{ul}^{^{\text{RIS}}}$ and $\Psi_{dl}^{^{\text{RIS}}}$ are independent RVs, the CDF of $\Psi_c^{^{\text{RIS}}}$ becomes: 
\begin{equation}\label{outage_prop2}
\begin{split}
    P[\Psi_c^{^{\text{RIS}}}(n)<\gamma_{_{T}}] =& \frac{\beta_1}{(\beta_1+\beta_3)/\gamma_{_{T}}} + \frac{\beta_2}{(\beta_2+\alpha_2)/\gamma_{_{T}}}- \frac{\beta_1}{(\beta_1+\beta_3)/\gamma_{_{T}}} \text{x} \frac{\beta_2}{(\beta_2+\alpha_2)/\gamma_{_{T}}}.
\end{split}
\end{equation} 
Now, by utilizing the values of $\beta_1$, $\beta_2$, $\beta_3$, and $\alpha_2$ (from Appendix \ref{prop1}), and after some simplification steps, each term becomes:
\begin{equation*}
\begin{split}
\frac{\beta_1}{(\beta_1+\beta_3)/\gamma_{_{T}}} &= \frac{\bar{P}_{_{U_T}}PL_{_{D_T,\text{BS}}}\gamma_{_{T}}}{\bar{P}_{_{U_T}}PL_{_{D_T,\text{BS}}} + \bar{P}_{_{D_T}}PL_{_{U_T,\text{BS}}}N\pi}\\
\frac{\beta_2}{(\beta_2+\alpha_2)/\gamma_{_{T}}} &= \frac{\bar{P}_{_{U_T}}PL_{_{\text{BS},D_R}}\gamma_{_{T}}}{\bar{P}_{_{U_T}}PL_{_{\text{BS},D_R}} + \bar{P}_{_{\text{BS}}}PL_{_{U_T,D_R}}N\pi}.
\end{split}
\end{equation*}
By substituting values of each term in \eqref{outage_prop2}, and after some simplification steps, the outage probability of the RIS-assisted cellular link is as follows:
\begin{equation}
\begin{split}
&P[\Psi_c^{^{\text{RIS}}}(n)<\gamma_{_{T}}] =\frac{\bar{P}_{_{U_T}}\gamma_{_{T}}\big[PL_{_{D_T,\text{BS}}}\big\{\bar{P}_{_{U_T}}PL_{_{\text{BS},D_R}}(2-\gamma_{_{T}})+\Omega_1\big\}-PL_{_{\text{BS},D_R}}\Omega_2\big]}{(\bar{P}_{_{U_T}}PL_{_{D_T,\text{BS}}} + \Omega_1)(\bar{P}_{_{U_T}}PL_{_{\text{BS},D_R}} +\Omega_2)}.    
\end{split}
\end{equation}
Where $\Omega_1 =  \bar{P}_{_{D_T}}PL_{_{U_T,\text{BS}}}N\pi$ and $\Omega_2 =  \bar{P}_{_{\text{BS}}}PL_{_{U_T,D_R}}N\pi$.
\section{Proof of Proposition 4.3}\label{prop3}
The RIS-assisted cellular link is a two-hop wireless link consisting of an uplink ($D_T \to$ RIS $\to$ BS) and a downlink (BS $\to$ RIS $\to D_R$) channels. Therefore, to find the mean of the net SNR of the RIS-assisted cellular link, we have to find the means of the SNR of uplink and downlink channels. Let $\kappa_{_{ul}} = \E[\gamma_{ul}^{^{\text{RIS}}}(n)]$ and $\kappa_{_{dl}} = \E[\gamma_{dl}^{^{\text{RIS}}}(n)]$ represent the means of the SNR of the uplink and the downlink channels, respectively. Then, by assuming zero phase error for the reflected signals ($\phi_{m_z,m_y} = 0$), these mean values can be written as (Appendix A \cite{aman2020effective}): $\kappa_{ul} = \E[\gamma_{ul}^{^{\text{RIS}}}(n)] = N\pi\bar{P}_{_{D_T}}\big/PL_{_{D_T,\text{BS}}}\omega_{_{0}}$
and $\kappa_{dl} = \E[\gamma_{dl}^{^{\text{RIS}}}(n)] = N\pi\bar{P}_{_{\text{BS}}}\big/PL_{_{\text{BS},D_R}}\omega_{_{0}}$.

Note that $\gamma_c^{^{\text{RIS}}}(n) = \min \{\gamma_{ul}^{^{\text{RIS}}}(n),\gamma_{dl}^{^{\text{RIS}}}(n) \}$. Because $\gamma_{ul}^{^{\text{RIS}}}(n)$ and $\gamma_{dl}^{^{\text{RIS}}}(n)$ are exponentially distributed RVs and because the minimum of two RVs is an exponential RV, $\gamma_{c}^{^{\text{RIS}}}(n)$ is also an exponential RV. Let $\kappa_c = \E[\gamma_c^{^{\text{RIS}}}(n)]$ be the mean of the net SNR of the RIS-assisted cellular link. It can then be written as:
\begin{equation}\label{kappa_c}
\kappa_c = \E[\gamma_c^{^{\text{RIS}}}(n)] = \frac{\kappa_{_{ul}} \kappa_{_{dl}}}{\kappa_{_{ul}}+ \kappa_{_{dl}}}.
\end{equation}
By substituting $\kappa_{_{ul}}$ and $\kappa_{_{dl}}$ in \eqref{kappa_c}, the numerator of \eqref{kappa_c} becomes,
\begin{equation*}
num = \kappa_{_{ul}} \kappa_{_{dl}} = \frac{N^2\pi^2\bar{P}_{_{D_T}}\bar{P}_{_{\text{BS}}}}{PL_{_{D_T,\text{BS}}}PL_{_{\text{BS},D_R}}\omega_{_{0}}^2}.
\end{equation*}
Similarly, the denumerator of \eqref{kappa_c} becomes,
\begin{equation*}
denum = \kappa_{_{ul}}+\kappa_{_{dl}}=\frac{N\pi(\bar{P}_{_{D_T}}PL_{_{\text{BS},D_R}} + \bar{P}_{_{\text{BS}}}PL_{_{D_T,\text{BS}}})}{PL_{_{D_T,\text{BS}}}\omega_{_{0}}PL_{_{\text{BS},D_R}}}.
\end{equation*}
By substituting $num$ and $denum$ in \eqref{kappa_c}, and after some simplification steps, the mean of the net SNR of of the RIS-assisted cellular link ($\E[\gamma_c^{^{\text{RIS}}}(n)]$) becomes:
\begin{equation}
\kappa_c = \E[\gamma_c^{^{\text{RIS}}}(n)] = \frac{N\pi\bar{P}_{_{D_T}}\bar{P}_{_{\text{BS}}}}{\bar{P}_{_{D_T}}PL_{_{\text{BS},D_R}}+\bar{P}_{_{\text{BS}}}PL_{_{D_T,\text{BS}}}}.
\end{equation}
\footnotesize{
\bibliographystyle{IEEEtran}
\bibliography{references}
}

\vfill\break

\end{document}